\newcommand{\removelatexerror}{\let\@latex@error\@gobble}
\DeclareMathOperator{\Succ}{Succ}
\newcommand{\beq}{\begin{equation}}
\newcommand{\eeq}{\end{equation}}
\newcounter{algorithmctr}[section]
\renewcommand{\thealgorithmctr}{\thesection.\arabic{algorithmctr}}
   {\refstepcounter{algorithmctr}\begin{list}{}{%
       \setlength{\rightmargin}{0\linewidth}%
       \setlength{\leftmargin}{.05\linewidth}
        \setlength{\itemsep}{1pt}
  \setlength{\parskip}{0pt}
  \setlength{\parsep}{0pt}}%
       \rmfamily\small
       \item[]{\setlength{\parskip}{0ex}\hrulefill\par%
        \nopagebreak{\bfseries\textsf{Algorithm \thealgorithmctr~}}}}%
   {{\setlength{\parskip}{-1ex}\nopagebreak\par\hrulefill} \end{list}}
\newtheorem{assumption}{Assumption}
\newtheorem{theorem}{Theorem}
\newtheorem{proposition}{Proposition}
\newtheorem{remark}{Remark}
\newtheorem{definition}{Definition}
\def\sq{\mathbin{{\strut\rule{1.25ex}{1.25ex}}}}
\newenvironment{proof}{{\textit{Proof:}}}{\hfill$\sq$}
\def\mathcolor#1#{\@mathcolor{#1}}
\def\@mathcolor#1#2#3{%
  \protect\leavevmode
  \begingroup
    \color#1{#2}#3%
  \endgroup
}
\title{\LARGE \bf Robust Learning Model Predictive Control for \\ Linear Systems Performing Iterative Tasks}
\author{Ugo Rosolia, Xiaojing Zhang, and Francesco Borrelli
\thanks{U.\ Rosolia, X.\ Zhang and F.\ Borrelli are with the Department of Mechanical Engineering, University of California at Berkeley ,
        Berkeley, CA 94701, USA
        {\tt\small\{ugo.rosolia, xiaojing.zhang,\ fborrelli\}{@}berkeley.edu}}%
}
\begin{document}

\maketitle
\thispagestyle{empty}
\pagestyle{empty}

\begin{abstract}
A robust Learning Model Predictive Controller (LMPC) for uncertain systems performing iterative tasks is presented. 
At each iteration of the control task the closed-loop state, input, and cost are stored and used in the controller design. 
This paper first illustrates how to construct robust control invariant sets and safe control policies exploiting historical data.
Then, we propose an iterative LMPC design procedure, where data generated by a robust controller at iteration $j$ are used to design a robust LMPC at the next iteration $j+1$. 
We show that this procedure allows us to iteratively enlarge the domain of the control policy and it guarantees recursive constraints satisfaction, input to state stability, and performance bounds for the certainty equivalent closed-loop system.
The use of different feedback policies along the horizon is  the key element of the proposed design.
The effectiveness of the proposed control scheme is illustrated on a linear system subject to bounded additive disturbances. 
\end{abstract}

\section{Introduction}
Exploiting historical data  to iteratively improve the performance of Model Predictive Controllers (MPC) has been an active theme of research over the past few decades \cite{rosolia2018data, hewing2019learning, aswani2013provably,kocijan2004gaussian,koller2018learning,berkenkamp2016safe,hewing2018cautious,hewing2019cautious,terzi2018learning,bacic2003general,ostafew2014learning,ostafew2016robust,brunner2013stabilizing,rosolia2019learning,lee1997model,lee2000convergence,lee2000model,lu2019robust,lorenzen2019robust,lorenzen2017adaptive,tanaskovic2013adaptive,tanaskovic2014adaptive,bujarCDC18,bujarbaruahAdapFIR,bujarArxivAdap}. The key idea is to use the stored state, input, and cost data to compute at least one of the following control design elements: $\emph{i})$ a \textit{model} to  predict the system trajectory for a given initial state and input sequence, $\emph{ii})$ a \textit{safe set} of states from which the control task can be completed using a known safe policy and $\emph{iii})$ a \textit{value function}, which for a given safe policy, maps each state of the safe set to the closed-loop cost of completing the task. 

Policy evaluation strategies used to estimate value functions from historical data are studied in Approximate Dynamic Programming (ADP) and Reinforcement Learning (RL)~\cite{bertsekas1996neuro,bertsekas2019feature,recht2018tour}. For instance, direct strategies compute the estimate value function which best fits the realized closed-loop cost over the stored states. On the other hand, in indirect strategies the estimate value function is computed by iteratively minimizing the temporal difference~\cite{sutton1988learning,bradtke1996linear}. A survey on policy evaluation strategies goes beyond the scope of this paper, we refer the reader to~\cite{bertsekas2019feature,bertsekas1996neuro} for a comprehensive review on this topic.


The integration of MPC with system identification  strategies used to estimate the prediction model has been extensively studied in literature~\cite{aswani2013provably,kocijan2004gaussian,berkenkamp2016safe,koller2018learning,hewing2018cautious,hewing2019cautious,rosolia2019learning, terzi2018learning,bacic2003general,brunner2013stabilizing,ostafew2014learning,ostafew2016robust}. In adaptive MPC strategies~\cite{lu2019robust,lorenzen2019robust,lorenzen2017adaptive,tanaskovic2013adaptive,tanaskovic2014adaptive,bujarCDC18,bujarbaruahAdapFIR,bujarArxivAdap}, set-membership approaches are used to identify the set of possible parameters and/or the domain of the uncertainty which characterize the system's model. Afterwards, robust MPC strategies for additive~\cite{bemporad1999robust} or parametric~\cite{fleming2014robust,evans2012robust} uncertainty are used to guarantee recursive constraint satisfaction. Another strategy to identify the system dynamics is to fit a Gaussian Process (GP) to experimental data~\cite{koller2018learning,hewing2018cautious,kocijan2004gaussian,berkenkamp2016safe, hewing2019cautious}. GP can be used to identify a nominal model and confidence bounds, which may be used to tighten the constraint set over the planning horizon. This strategy provides high-probability safety guarantees~\cite{koller2018learning, hewing2019cautious}. The effectiveness of GP-based strategies on experimental platforms has been shown in~\cite{hewing2018cautious, berkenkamp2016safe}, where an MPC is used to race a 1/43-scale vehicle and to safely fly a drone. Regression strategies may also be used to identify the system model~\cite{rosolia2019learning, terzi2018learning}. For instance, the authors in~\cite{terzi2018learning} used a linear regression strategy to identify both a nominal model and the disturbance domain used for robust MPC design. In~\cite{rosolia2019learning}, we used local linear regression to identify the system model used by the controller, which was able to drive a 1/10-scale race car at the limit of handling.

Model-based and data-based approaches for computing safe sets have also been proposed in literature \cite{fisac2018general,kaynama2011continual,lygeros1999controllers, liniger2017real, wabersich2018linear, bacic2003general,brunner2013stabilizing,blanchini2005relatively, LMPClinear, LMPC}. In reachability-based strategies safe sets are computed solving a two players game between the controller and the disturbance \cite{fisac2018general,kaynama2011continual,lygeros1999controllers}.
Furthermore, these strategies provide a control policy, which can be used to guarantee safety by robustly constraining the evolution of the system within the safe set \cite{fisac2018general}. 
Also viability theory may be used to compute safe sets \cite{liniger2017real}. The authors in \cite{liniger2017real} showed how to compute an inner approximation of the viability kernel and demonstrated the effectiveness on a RC-car set-up. In \cite{wabersich2018linear} the authors showed how to compute safe sets for uncertain systems exploiting data from a robust controller, afterwards they used the safe set in a linear model predictive safety certification framework. In~\cite{bacic2003general,brunner2013stabilizing} the authors computed safe sets and the associated control policy combining stored trajectories with polyhedral and ellipsoidal invariant sets. Another approach is proposed in~\cite{blanchini2005relatively} where the stored trajectories are mirrored to construct invariant sets. Finally, in \cite{LMPClinear, LMPC} we have shown how data from a deterministic system can be trivially used to compute safe sets. However, these strategies cannot be used to compute safe sets for uncertain systems. 

In this paper we present an iterative \emph{robust} Learning Model Predictive Control (LMPC) design procedure for uncertain systems. We consider control tasks where the goal is to minimize a given cost function while satisfying state and input constraints. An iteration is a finite time execution of a control task from a given initial condition. We assume that at each iteration  the constraint set and cost function are unchanged, but the initial condition and duration can be selected by the control designer.
At each iteration we exploit historical data and the LMPC policy from previous iterations to construct robust safe sets and approximations to the value function. Furthermore, we show how to compute a safe control policy which is defined over the robust safe set and it is used in the LMPC design. In particular, the controller plans the system trajectory using either a safe policy or a disturbance feedback policy.
We show that the proposed strategy guarantees that: \emph{i}) state and input constraints are recursively satisfied, \emph{ii}) the closed-loop system is Input-to-State Stable (ISS) and \emph{iii}) the performance of the certainty equivalent system is bounded by a function $Q^j(\cdot)$ which is non-increasing with the iteration index (i.e. $Q^{j+1}(\cdot) \leq Q^j(\cdot)$). 
Finally, we show that the proposed iterative design procedure can be used to collect data on progressively larger regions of the state space. Thus, in applications where a conservative robust controller is available, our iterative algorithm may be used to improve the closed-loop performance and to enlarge the region of attraction.

Compared with standard robust MPC strategies~\cite{bemporad1999robust, mayne2005robust, chisci2001systems, bemporad2003min, chen1998robust}, where the terminal cost and constraint set used in the design are fixed, the proposed strategy iteratively updates these components used to synthesize the control policy. This strategy, which is tailored to iterative tasks, allows us to iteratively enlarge the region of attraction of the controller and to improve the performance of the certainty equivalent closed-loop system. 
Compared to the robust certainty equivalent LMPC in~\cite{rosolia2017robust}, where we constructed a control invariant set and a control Lyapunov function for the certainty equivalent system, in this work we compute a \textit{robust} control invariant set and a control Lyapunov function for the uncertain system. We underline that extending the nominal and certainty equivalent LMPC approaches~\cite{LMPC,LMPClinear, rosolia2017robust} to uncertain systems is not straightforward. 
In fact, as it will be clear later in this paper, 
standard shifting MPC arguments for proving recursive robust constraint satisfaction do not  apply. 
In particular, the use of different feedback policies along the prediction horizon is one of the key elements of the proposed control design which is necessary for providing robustness guarantees.


This paper is organized as follows: in Section~\ref{sec:background} we recall some definitions from set theory and the definition of Input-to-State Stability (ISS). Section~\ref{sec:invSetFromData} describes the challenges of learning safe sets from stored data of uncertain systems. Section~\ref{sec:ProbForm} describes the problem formulation and design requirements. The strategy proposed to compute safe sets and the $Q$-function is described in Section~\ref{sec:LMPCpre}. Section~\ref{sec:LMPCpolicy} describes the control design strategy and Section~\ref{sec:properties} illustrates the controller properties. A strategy to approximate the region of attraction of the controller is presented in Section~\ref{sec:regAtt}. Finally, in Section~\ref{sec:results} we show the effectiveness of the proposed controller an a double integrator system subject to bounded additive uncertainty.

\section{Technical Background}\label{sec:background}
In this section we recall some definitions from set theory \cite[Chapter 10]{BorrelliBemporadMorari_book}, which will be used later in this work.

\begin{definition}[Positive Invariant Set]\label{def:Inv} A set $\mathcal{O} \subseteq \mathcal{X}$ is said to be a positive invariant set for the autonomous system $x_{t+1} = A x_t$, if
\begin{equation*}
x \in \mathcal{O} \rightarrow Ax \in \mathcal{O}.
\end{equation*}
\end{definition}

\begin{definition}[Robust Positive Invariant Set]\label{def:RobustInv} A set $\mathcal{O} \subseteq \mathcal{X}$ is said to be a robust positive invariant set for the uncertain autonomous system $x_{t+1} = A x_t + w_t$, with $w_t \in \mathcal{W}$ 
if
\begin{equation*}
x \in \mathcal{O} \rightarrow Ax+w\in \mathcal{O},~ \forall w \in \mathcal{W}.
\end{equation*}
\end{definition}

\begin{definition}[Robust Control Positive Invariant Set]\label{def:RobustInv} A set $\mathcal{C} \subseteq \mathcal{X}$ is said to be a robust control positive invariant set for the uncertain system $x_{t+1} = A x_t + Bu_t+ w_t$, with $w_t \in \mathcal{W}$ and $u_t \in \mathcal{U}$, if
\begin{equation*}
x \in \mathcal{C} \rightarrow \exists u \in \mathcal{U}: A x + B u + w \in \mathcal{C},~ \forall w \in \mathcal{W}.
\end{equation*}
\end{definition}

\begin{definition}[Robust Successor Set]\label{def:RobustSuc}
Given a control policy $\pi(\cdot)$ and the closed-loop system $x_{t+1} = A x_t + B\pi(x_t) + w_t$, we denote the robust successor set from the set $\mathcal{S}$ as 
\begin{equation*}
\begin{aligned}
    \Succ(\mathcal{S}, \mathcal{W}, {\pi}) = \{ & x_{t+1} \in \mathbb{R}^n :  \exists x_t \in \mathcal{S},  \exists w_t \in \mathcal{W}  \\
    & \text{ such that } x_{t+1} = A x_t + B \pi(x_t) + w_t \}.
\end{aligned}
\end{equation*}
\end{definition}

Given the initial state $x_t$, the robust successor set $\Succ( x_t , \mathcal{W}, {\pi})$ collects the states that the uncertain autonomous system may reach in one time step.

\begin{definition}[N-Step Robust Reachable Set]\label{def:RobustSuc} Given a \\ control policy $\pi(\cdot)$ and the closed-loop system $x_{t+1} = A x_t + B\pi(x_t) + w_t$ with $w_t \in \mathcal{W}$ for all $t\geq 0$, for $k=\{0, \ldots, N-1 \}$ we recursively define the $N$-step robust reachable set from the set $\mathcal{S}$ as 
\begin{equation*}
\begin{aligned}
    \mathcal{R}_{t\rightarrow t+k+1}(\mathcal{S}, {\pi}) = \Succ(\mathcal{R}_{t\rightarrow t+k}(\mathcal{S}, {\pi}), \mathcal{W}, {\pi})
\end{aligned}
\end{equation*}
where $\mathcal{R}_{t\rightarrow t}(\mathcal{S}, {\pi})=\mathcal{S}$. Robust reachable sets are also referred to as forwards reachable sets.
\end{definition}

Given a linear time-invariant system, the $N$-Step robust reachable set $\mathcal{R}_{t\rightarrow t+N}(\mathcal{S}, {\pi})$ collects the states which can be reached from the set $\mathcal{S}$ in $N$-steps.
Finally, we recall the definition of Input-to-State Stability (ISS) of a robust invariant set \cite{lin1995various}, which extends the more widely known notion of ISS of an equilibrium point \cite{jiang2001input, goulart2006optimization,khalil2002nonlinear,grune2014iss}. We use the standard function classes $\mathcal{K}$, $\mathcal{K}_\infty$, and $\mathcal{KL}$ notation (see \cite{kellett2014compendium}) and we define the distance from a point $x\in\mathbb{R}^n$ to a set $\mathcal{O}\subseteq \mathbb{R}^n$ as
\begin{equation*}
    |x|_\mathcal{O} \overset{\Delta}{=} \inf_{d\in\mathcal{O}} ||x-d||_2.
\end{equation*}

\begin{definition}[Input to State Stability (ISS)~\cite{lin1995various}]\label{def:iss}
Let $\mathcal{O}$ be a robust positive invariant set for the autonomous system~$x_{t+1} = A x_t + B \pi (x_t) + w_t$ with $w_t \in \mathcal{W}$. We say that the closed-loop system is ISS with respect to $\mathcal{O}$ if for all $w_t \in \mathcal{W}$, $t \geq 0$, and $x_0 \in \mathcal{X}$ we have:
\begin{equation*}
    |x_{t+1}|_\mathcal{O} \leq \beta(|x_0|_{\mathcal{O}}, t+1) + \gamma \big(  \textstyle \text{sup}_{k \in \{0, \ldots, t\}}  ||w_k|| \big),
\end{equation*}
where $\beta(\cdot, \cdot)$ is a class-$\mathcal{KL}$ function and $\gamma(\cdot)$ is a class-$\mathcal{K}$ function.
\end{definition}


\section{Computing Invariant Sets From Data}\label{sec:invSetFromData}
In order to motivate our work, this section highlights the challenges associated with the computation of invariant sets from data. First, we recall from~\cite{LMPC, LMPClinear} how historical data can be used to compute invariant sets for deterministic systems. Consider the discrete time linear system
\begin{equation*}
    \bar x_{t+1}^j = A \bar x_t^j + B \pi^j(\bar x_t^j)
\end{equation*}
where $\pi^j(\cdot)$ is a feedback policy known only along the $j$th stored trajectory $ \bar {\bf{x}}^j = [\bar x_0^j, \ldots, \bar x_t^j, \ldots, \bar x_{T^j}^j]$. 
Assume that $\pi^j(\cdot)$ is able to execute the desired task safely, meaning that $\bar x_{T^j} \in \mathcal{O}$. 
At any iteration $i>j$ and time $k\geq 0$, if the system state $x_k^i$ equals a state $x_t^j$ which has been visited at the previous  $j$th iteration, then the feedback policy $\pi^j(\cdot)$ will drive the system along the $j$th trajectory. This obvious fact is a consequence of the system being deterministic. More importantly, as the policy $\pi^j(\cdot)$ brings the system to the invariant set $\mathcal{O}$, the convex hull of visited states and $\mathcal{O}$ is a control invariant set.
Therefore, invariant sets  for deterministic systems can be easily built from data.


In contrast, when dealing with \emph{uncertain systems}, the set of visited states is not an invariant set. In fact, consider the discrete time uncertain system
\begin{equation*}
    x^j_{t+1} = A x^j_t + B \pi^j(x^j_t) + w_t^j,
\end{equation*}
where the random disturbance $w_t^j$ belongs to the set  $\mathcal{W}$ and the $j$th stored trajectory is ${\bf{x}}^j = [x_0^j, \ldots, x_t^j, \ldots, x_{T^j}^j]$. 
Assume that $\pi^j(\cdot)$ is able to execute the desired task safely at iteration~$j$. 
Notice that the stored trajectory ${\bf{x}}^j$ is associated with a specific disturbance realization $[w_0^j,\ldots,w_t^j,\ldots]$. For this reason at any iteration $i>j$ and time $k\geq 0$, if the system state $x_k^{i}$ equals a state $x_t^j$ that has been visited, applying the feedback policy $\pi^j(\cdot)$ may drive the system to a state neither stored nor safe, due to a potentially different disturbance realization $[w_0^{i},\ldots,w_t^{i},\ldots]$. In conclusion, the set of visited states cannot be naively exploited to compute invariant sets.

Furthermore, we underline that even if a control invariant set for uncertain systems can be computed from data, its use for MPC design is not straightforward. This issue will become clear later in this paper. In the following, we first present a strategy to construct robust control invariant sets using stored data from a linear uncertain system. Afterwards, we leverage these sets to iteratively synthesize robust LMPC policies.


\section{Problem Formulation}\label{sec:ProbForm}
We consider a discrete time uncertain system of the form:
\begin{equation}\label{eq:system0}
    x_{t+1} = A x_t + B u_t + w_t,
\end{equation}
where $x_t \in \mathbb{R}^n$ and  $u_t \in \mathbb{R}^d$ are the state and the input at time $t$ and the matrices $A$ and $B$ are known. The disturbances $w_t^j$ are zero mean independent and identically distributed (\textit{i.i.d.}) with bounded support $\mathcal{W}\subset\mathbb{R}^n$.

\begin{assumption}\label{ass:first}
The disturbance's support $\mathcal{W}$ is a compact polytope described by $l$ vertices $\{v^1_w, \ldots, v^l_w\}$ and it contains the origin.
\end{assumption}

Furthermore, the system is subject to the following convex constraints on states and inputs:
\begin{equation}\label{eq:constraintSet}
		x_t \in \mathcal{X} \text{ and } u_t \in \mathcal{U}, ~ \forall t \geq 0,
\end{equation}
where the sets $\mathcal{X}$ and $\mathcal{U}$ contain the origin and are assumed to be compact.

In this paper, we consider control tasks where we would like to steer system~\eqref{eq:system0} towards the set $\mathcal{O}$, while minimizing the summation of the stage cost $h:\mathbb{R}^n \times \mathbb{R}^d \rightarrow \mathbb{R}$ and satisfying constraints~\eqref{eq:constraintSet}. Throughout the paper we make the following assumptions.
\begin{assumption}\label{ass:O_inf}
The set $\mathcal{O} \subset \mathcal{X} \subset \mathbb{R}^n$ is a robust positive invariant set for the autonomous system $x_{t+1} = (A + BK) x_t+w_t$ with $w_t \in \mathcal{W}$. Furthermore, $\mathcal{O}$ is a polyhedron defined through its vertices $\{v_o^1, \ldots, v^{m}_o \}$ and
\begin{equation*}
    K \mathcal{O} = \{ u \in \mathbb{R}^d: \exists x \in \mathcal{O}, u = Kx \}.
\end{equation*}
\end{assumption}
\begin{assumption}\label{ass:cost}
The  stage cost $h(\cdot,\cdot)$ is continuous and jointly convex in its arguments. Furthermore, we assume that for all $x \in \mathbb{R}^n$, and for all $ u \in \mathbb{R}^d$ the stage cost satisfies:
\begin{equation*}
\begin{aligned}
\alpha_x^l(|x|_\mathcal{O}) \leq h(x,0) &\leq \alpha_x^u(|x|_\mathcal{O})\\
&\text{and } \alpha_u^l ( |u|_{K\mathcal{O}} ) \leq h(0,u) \leq \alpha_x^u(|u|_{K\mathcal{O}}),
\end{aligned}
\end{equation*}
where $\alpha_{x}^u,\alpha_{x}^l, \alpha_{u}^u$, and $\alpha_{u}^l \in \mathcal{K}_\infty$.
\end{assumption}

\vspace{0.2cm} \begin{remark}
In Assumption~\ref{ass:O_inf} a robust invariant $\mathcal{O}$ is  required.  In the proposed approach $\mathcal{O}$ can be a very small neighborhood of the origin. In fact, the iterative nature of the control design will allow us enlarge the closed-loop region of attraction at each iteration.

\end{remark}

\subsection{Iteration and Control Design Objectives}\label{sec:Obj}
Iteration $j$ refers to a finite time execution of the control task from the initial conditions $x_0^j$. At each iteration, the initial condition may be different, however the system model~\eqref{eq:system0}, the constraints~\eqref{eq:constraintSet}, the set $\mathcal{O}$, and the cost function $h(\cdot,\cdot)$ are identical. We denote $x_t^j$ and $u_t^j$ as the state and input of the system at time $t$ of iteration $j$, i.e.,
\begin{equation}\label{eq:sys}
	x_{t+1}^j = Ax_t^j + B u^j_t + w_t^j.
\end{equation}
Furthermore, we define $T^j$ as the finite time duration of iteration $j$. As we will discuss in Section~\ref{sec:LMPCpre}, the time $T^j$ is selected by the control designer.

At each iteration $j$, our objective is to design a state feedback policy for the uncertain system~\eqref{eq:sys}  
\begin{equation}\label{eq:feedbackPolicy}
    \pi^j(\cdot): \mathcal{C}^j \subseteq \mathbb{R}^n  \rightarrow \mathbb{R}^d,
\end{equation}
and the associated region of attraction $\mathcal{C}^j$ such that at each $j$th iteration and for all $x_0^j \in \mathcal{C}^j \subseteq \mathbb{R}^n$ we have that:
\begin{enumerate}
    \item The certainty equivalent system 
    \begin{equation}\label{eq:disturbanceFreeSys}
        \bar x_{t+1}^j = A \bar  x^j_t + B \bar  u^j_t
    \end{equation}
    with $\bar u^j_t = \pi^j(\bar x^j_t)$ converges asymptotically to the set $\mathcal{O}$, i.e. $\lim_{t \rightarrow \infty} \bar x_t^j \in \mathcal{O}$. 

    \item The closed-loop system $x_{t+1}^j = A x_t^j + B \pi^j(x_t^j) + w_t^j$ is Input-to-State Stable (ISS) with respect to the set $\mathcal{O}$ (see Section~\ref{sec:background} for the definition of ISS). 

    \item The closed-loop state and input constraints are robustly satisfied, namely
    \begin{equation*}
        x_t^j \in \mathcal{X} \text{ and } \pi^j(x_t^j) \in \mathcal{U }, \forall w_t^j \in \mathcal{W},~\forall t \geq 0.
    \end{equation*}
    
    \item The domain $\mathcal{C}^j$ of policy $\pi^j(\cdot)$ does not shrink with the iteration index, i.e., $\mathcal{C}^j \subseteq \mathcal{C}^{j+1}$.

    \item The iteration cost of the certainty equivalent system~\eqref{eq:disturbanceFreeSys}, defined as 
    \begin{equation*}
        J^{j}_{0\rightarrow T^j}(\bar x_0^{j}) = \sum_{k=0}^{T^j} h(\bar x_t^j,\pi^j(\bar x_t^j)),
    \end{equation*}
    is an upper-bound to the function $Q^{j-1}(\cdot)$ (i.e. $J^{j}_{0\rightarrow  T^j}( x_0^{j}) \leq Q^{j-1}( x_0^{j})$), which is non-increasing at each iteration 
    \begin{equation*}
        Q^{j}(\bar x) \leq Q^{k}(\bar x), \forall j \geq k.
    \end{equation*}

\end{enumerate}
Property 5) implies that, as more data is collected, the upper-bound on the performance of the certainty equivalent closed-loop system is non-increasing.

\vspace{0.2cm} \begin{remark}
Note that convergence and stability properties 1) and 2) hold for $t\rightarrow \infty$, although we study iterations with finite time duration. In particular, an iteration is terminated after $T^j$ time steps and, therefore, at the end of the $j$th iteration the closed-loop system may not reach the set $\mathcal{O}$. As we will discuss later on, the duration of an iteration $T^j$ may affect the closed-loop performance of the controller.
\end{remark}

\section{LMPC Preliminaries}\label{sec:LMPCpre}
In this section, we assume that a robust policy at iteration $j$ is given and we describe how to compute the terminal cost and constraints used to synthesize the control policy at the next iteration $j+1$.
In particular, we show how historical data and the $j$th robust MPC policy can be used to build a \textit{robust safe set} of states from which the iteration $j+1$ can be executed.
	
The iterative synthesis procedure used to update the control policy is described in Section~\ref{sec:LMPCpolicy}. This iterative procedure is based on the robust safe set, $Q$-function, and safe policy, which are defined  in this section. Their initialization at iteration $j=0$ is also discussed.

\subsection{Robust Safe Set}\label{sec:robSS}
This section shows how to iteratively construct robust control invariant sets. In particular, we run the closed-loop system at iteration $j$ and we exploit the closed-loop trajectory to construct a robust safe set.

We initialize the \textit{robust convex safe set} at iteration $j=0$ as
\begin{equation}\label{eq:SS_init}
	\mathcal{CS}^0=\mathcal{O}.
\end{equation}
Afterwards, we design the robust $N$-steps policy
\begin{equation}\label{eq:NstepPolicy}
    \boldsymbol{\pi}^{1,*}_t(\cdot)= [\pi^{1,*}_{t|t}(\cdot), \ldots, \pi^{1,*}_{t+N\text{-}1|t}(\cdot)],
\end{equation}
which steers the following closed-loop system
\begin{equation}\label{eq:NstepPredicted}
    x^1_{t+1}= A x^1_t +B \pi^{1,*}_{t|t}(x_t) + w_t
\end{equation}
to the robust convex safe set $\mathcal{CS}^0$. Next, we show that the control policy~\eqref{eq:NstepPolicy} and the convex safe set~\eqref{eq:SS_init} can be used to construct the convex safe set at iteration $1$. 
\begin{assumption}\label{ass:policyAssumption}
For all $t \in \{0, \ldots, T^1\}$, the $N$-steps policy $\boldsymbol{\pi}^1_t(\cdot)$ in~\eqref{eq:NstepPolicy} robustly steers the predicted closed-loop system
\begin{equation*}
    \begin{aligned}
        x_{k+1|t}^1 = A x_{k|t}^1 + B  \pi_{k|t}^{1,*}( x_{k|t}^1 & ) + w^1_{k|t}, \\
    &\forall k = t, \ldots, t+N-1
    \end{aligned}
\end{equation*}
from the state $x_t^1{\in \mathcal{C}^1}$ to the robust convex safe set $\mathcal{CS}^0=\mathcal{O}$ in $N$-steps, while robustly satisfying state and input constraints~\eqref{eq:constraintSet}.
\end{assumption}

	\vspace{0.2cm} \begin{remark}
			The control policy $\boldsymbol{\pi}^{1,*}_t(\cdot)$, which satisfies Assumption~\ref{ass:policyAssumption}, can be computed using the iterative procedure that we will be describing in Section~\ref{sec:LMPCpolicy}. 
	\end{remark}

\vspace{0.2cm} \begin{remark}
We underline that $T^j$ represents the duration of the $j$th closed-loop simulation. On the other hand, $N$ is the length of the prediction horizon associated with the control policy $\boldsymbol{\pi}^{1,*}_t(\cdot)$. Notice that $T^j$ and $N$ are not related but in general $N<T^j$. Indeed, the horizon length $N$ is usually chosen much smaller than the task duration $T^j$ to reduce the computational burden associated with the control policy. Finally, we underline that the duration of the control task $T^j$ is chosen by the control designer. 
\end{remark}

Let the vectors
\begin{equation}\label{eq:recordedUncertainTrajectory}
\begin{aligned}
 	 [x_0^{1}, \ldots,  x_{T^1}^{1}] \\
	 [u_0^{1}, \ldots,  u_{T^1}^{1}]
\end{aligned}
\end{equation}
collect states and inputs associated with a simulation of the closed-loop system~\eqref{eq:NstepPredicted}. As we will discuss later on, by linearity of the system, any state in the convex-hull of the closed-loop trajectory in~\eqref{eq:recordedUncertainTrajectory} can be robustly steered to~$\mathcal{O}$.
However, the convex hull of the stored states in~\eqref{eq:recordedUncertainTrajectory} and $\mathcal{O}$, denoted as $\text{Conv}\{\{\cup_{t=0}^{T^j}x_t^1\}\cup\mathcal{O} \}$, is not invariant. In fact, the set $\text{Conv}\{\{\cup_{t=0}^{T^j}x_t^1\}\cup\mathcal{O} \}$ does not necessarily contain the $k$-steps robust reachable sets $\mathcal{R}_{t \rightarrow t + k}( x_0^1, { \boldsymbol{\pi}^{1,*}_t})$ from the starting state $x_0^1$, as shown in Figure~\ref{fig:stateNominalEvolution}. 
\begin{figure}[h!]
    \centering
    \includegraphics[width= 0.9\columnwidth]{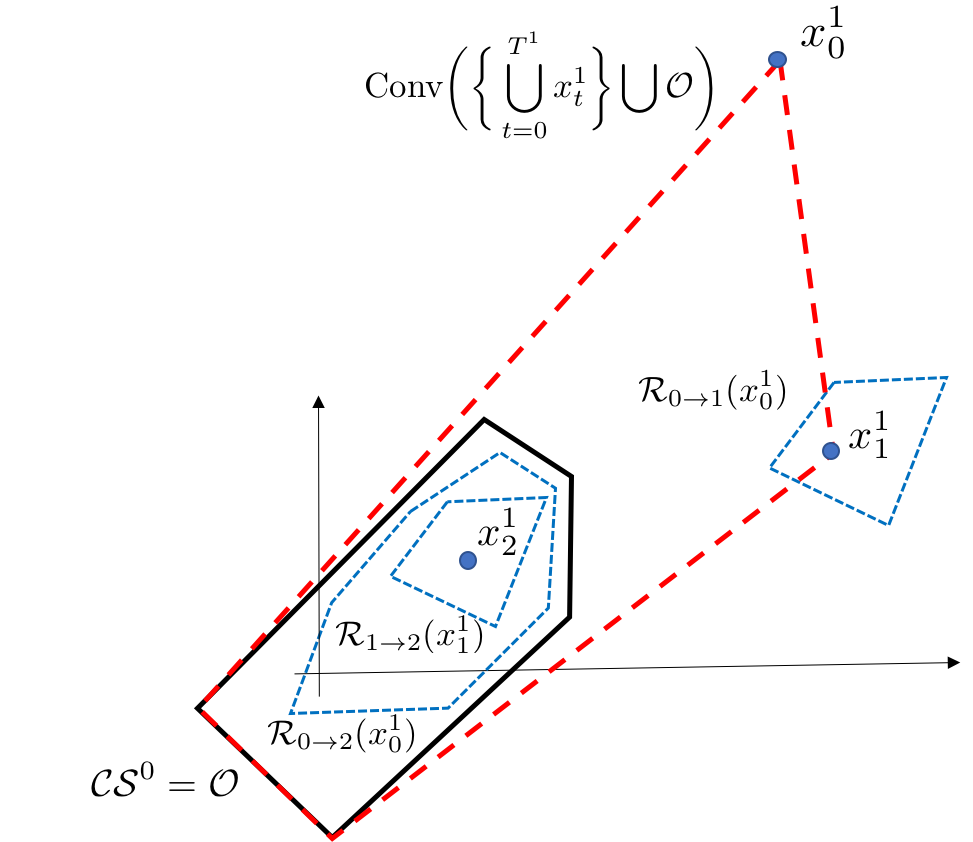}
    \caption{The figure shows the robust invariant set $\mathcal{O}$ (solid black), the robust reachable sets $\mathcal{R}_{t\rightarrow t+k}(x_0^j,  \boldsymbol{\pi}^{1,*}_t)$ (dashed blue line) and the convex hull of the states in~\eqref{eq:recordedUncertainTrajectory} and $\mathcal{O}$, denoted as $\text{Conv}\{\{\cup_{t=0}^{T^j}x_t^1\}\cup\mathcal{O} \}$ (dashed red line). We notice that the set $\text{Conv}\{\{\cup_{t=0}^{T^j}x_t^1\}\cup\mathcal{O} \}$ (dashed red line) does not contain the robust reachable sets $\mathcal{R}_{t\rightarrow t+k}(x_0^j,  \boldsymbol{\pi}^{1,*}_t)$ (dashed blue line) and therefore it is not a robust invariant for the closed-loop system~\eqref{eq:NstepPredicted}.}
    \label{fig:stateNominalEvolution}
\end{figure}

Now, we notice that robust control invariant sets can be computed using $k$-steps robust reachable sets $\mathcal{R}_{t\rightarrow t+k}( x_0^1, { \boldsymbol{\pi}^{1,*}_t})$ from the stored states in~\eqref{eq:recordedUncertainTrajectory}. In particular, we notice that the union of the $k$-steps robust reachable sets $\mathcal{R}_{t\rightarrow t+k}( x_0^1, { \boldsymbol{\pi}^{1,*}_t})$ for $k=0, \ldots, N$ and the robust convex safe set $\mathcal{CS}^0=\mathcal{O}$ is a robust control invariant. Therefore, we define the \textit{robust convex safe set} at iteration $j=1$ as the convex hull of the robust reachable sets $ \mathcal{R}_{t\rightarrow t+k}(x_t^1, { \boldsymbol{\pi}^{1,*}_t})$ and the robust convex safe set $\mathcal{CS}^0$ at iteration~$0$:
\begin{equation}
    \mathcal{CS}^1 = \text{Conv}\Bigg(  \bigg\{ \bigcup_{t=0}^{T^1} \bigcup_{k=0}^N \mathcal{R}_{t\rightarrow t+k}(x_t^1, { \boldsymbol{\pi}^{1,*}_t}) \bigg\}\bigcup \mathcal{CS}^0\Bigg).
\end{equation}
The above robust convex safe set at iteration $j=1$ is constructed for the closed-loop system~\eqref{eq:NstepPredicted} and it is shown in Figure~\ref{fig:cs}.

\begin{figure}[h!]
    \centering
    \includegraphics[width= 0.9\columnwidth]{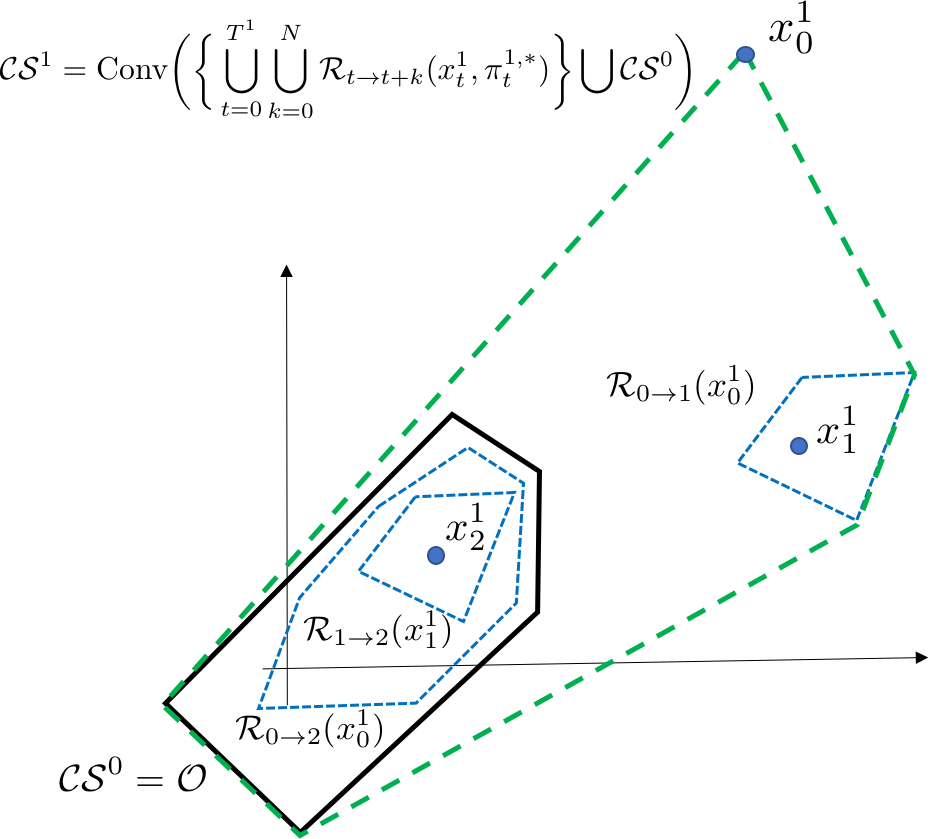}
    \caption{The figure shows the robust invariant set $\mathcal{O}$ (solid black line), the convex safe set $\mathcal{CS}^1 = \text{Conv}( \{ \bigcup_{t=0}^{T^1} \bigcup_{k=0}^N \mathcal{R}_{t\rightarrow t+k}(x_t^1, { \boldsymbol{\pi}^{1,*}_t}) \} \bigcup \mathcal{CS}^0)$ (dashed green line) and the robust reachable sets $\mathcal{R}_{t\rightarrow t+k}(x_0^1, { \boldsymbol{\pi}^{1,*}_t})$ (dashed blue line).}
    \label{fig:cs}
\end{figure}

The above process is repeated at each iteration $j\geq 1$. Clearly, Assumption~\ref{ass:policyAssumption} must hold when $\mathcal{CS}^{0}$ is replaced with $\mathcal{CS}^{j-1}$ and iteration 1 with $j$. 
More formally, given the $N$-steps policy 
\begin{equation}\label{eq:policy_j_iteration}
    \boldsymbol{\pi}^{j,*}_t(\cdot) = [\pi_{t|t}^{j,*}(\cdot), \ldots, \pi_{t+N\text{-}1|t}^{j,*}(\cdot)]
\end{equation}
and the closed-loop system
\begin{equation}\label{eq:cl_sys}
    x_{t+1}^j = Ax_{t}^j + B \pi_{t|t}^{j,*}(x_{t}^j) + w_t^j
\end{equation}
we assume that the following holds.
\begin{assumption}\label{ass:policyAss}
For all $t \in \{0, \ldots, T^j\}$, the $N$-steps policy $\boldsymbol{\pi}^{j,*}_t$ from~\eqref{eq:policy_j_iteration} robustly steers the predicted closed-loop system
\begin{equation*}
    \begin{aligned}
        x_{k+1|t}^j = A x_{k|t}^j + B  \pi_{k|t}^{j,*}( x_{k|t}^j & ) + w^j_{k|t}, \\
    &\forall k = t, \ldots, t+N-1
    \end{aligned}
\end{equation*}
from the state $x_t^j {\in \mathcal{C}^j}$ to the robust convex safe set $\mathcal{CS}^{j-1}$ in $N$-steps, while robustly satisfying state and input constraints~\eqref{eq:constraintSet}.
\end{assumption}

Later in Section~\ref{sec:LMPCpolicy}, we will show how to synthesize a control polity $\boldsymbol{\pi}^{j,*}_t$ which satisfies Assumption~\ref{ass:policyAss}. 

At iteration $j$, we iteratively define the \textit{robust convex safe set}: 
\begin{equation}\label{eq:CS}
    \mathcal{CS}^j = \text{Conv}\Bigg(  \bigg\{ \bigcup_{t=0}^{T^j} \bigcup_{k=0}^N \mathcal{R}_{t\rightarrow t+k}(x_t^j, { \boldsymbol{\pi}^{j,*}_t}) \bigg\}\bigcup \mathcal{CS}^{j-1}\Bigg).
\end{equation}
Details on the computation and storage of the convex safe set are provided next.

\subsection{Robust Convex Safe Set: Vertex Representation}
Recall from Assumption~\ref{ass:O_inf} that $l$ denoted the number of vertices of the disturbance support. Now, we define the $l^k$ vertices of the $k$-step robust reachable set $\mathcal{R}_{t\rightarrow t+k} (x^j_t, { \boldsymbol{\pi}^{j,*}_t})$ from $x^j_t$ as
\begin{equation}\label{eq:verticesReachSet}
    [v_{t+k|t}^{j,1}, \ldots, v_{t+k|t}^{j,l^k}].
\end{equation}
The vertices of the robust reachable sets $\mathcal{R}_{t\rightarrow t+k} (x^i_t, { \boldsymbol{\pi}^{i,*}_t})$ for all $k \in \{0,\ldots, N-1\}$, $i\in\{0,\ldots,j\}$ and $t \in \{0, \ldots, T^j\}$ are collected by the following matrix
\begin{equation}\label{eq:xMatrix}
\begin{aligned}
 	{\bf{X}}^j = [{\bf{X}}^{j-1},  &~v_{0|0}^{j,1},\ldots,  v_{N-1|0}^{j,l^{N-1}}, \ldots\\
 	&~v_{t|t}^{j,1}, \ldots,  v_{t+N-1|t}^{j,l^{N-1}}, \ldots\\
 	&~v_{T^j|T^j}^{j,1}, \ldots,  v_{T^j + N-1|T^j}^{j,l^{N-1}}],
\end{aligned}
\end{equation}
where at the $j$th iteration $v_{t+k|t}^{j,i}$ represents the $i$th vertex of the robust reachable set $\mathcal{R}_{t\rightarrow t+k} (x^j_t, { \boldsymbol{\pi}^{j,*}_t})$.
In the above recursive definition, we set ${\bf{X}}^{0}=[v_o^1, \ldots, v_{o}^m]$, where $v_o^i$  for $i\in\{1,\ldots,m\}$ are the vertices of $\mathcal{O}$ from Assumption~\ref{ass:O_inf}. The matrix ${\bf{X}}^j \in \mathbb{R}^{n \times \text{col}({\bf{X}}^j)}$ , where the number of columns $\text{col}({\bf{X}}^j) = (T^j +1)\sum_{k=0}^{N-1}l^k +  \text{col}({\bf{X}}^{j-1})$.

Finally, as the columns of the matrix ${\bf{X}}^j$ in~\eqref{eq:xMatrix} collect all vertices of the robust reachable sets $\mathcal{R}_{t\rightarrow t+k}(x_t^j, { \boldsymbol{\pi}^{j,*}_t})$, the robust convex safe set $\mathcal{CS}^j$ from \eqref{eq:CS} can be written as
\begin{equation}\label{eq:RS_vertex}
    \mathcal{CS}^j = \Big\{ x \in \mathbb{R}^n : \exists {\bm{\lambda}}^j \geq 0, {\bf{X}}^j {\bm{\lambda}}^j = x \text{ and } \mathds{1}^\top{\bm{\lambda}}^j = 1 \Big\},
\end{equation}
where $\mathds{1} \in \mathbb{R}^{\text{col}({\bf{X}}^j)}$ is a vector of ones.

\vspace{0.2cm} \begin{remark}
Notice that the strategy presented in this paper is based on a commonly used ``vertex enumeration approach"~\cite{scokaert1998min, BorrelliBemporadMorari_book}. Its worst case complexity is exponential in the horizon $N$ of the feedback policy~\eqref{eq:policy_j_iteration}, although independent on the length of the task duration $T^j$. We underline that this paper focuses on the fundamental properties of the controller design. Computational tractability can be obtained as in any MPC scheme by using a different disturbance model or feedback parametrization.
\end{remark}


\subsection{Robust Q-Function}\label{sec:itCost}
The robust $Q$-function approximates the cost-to-go over the robust convex safe set and it is constructed iteratively as explained next. 
At iteration $j$, we assume that we are given the robust $Q$-function  $Q^{j-1}(\cdot)$ which maps each state $x \in \mathcal{CS}^{j-1}$ to the closed-loop cost, and we show how to construct a robust $Q$-function at the next iteration $j$.
This recursion is initialized at iteration $0$ setting the robust $Q$-function 
\begin{equation}\label{eq:Qfun_init}
Q^0(x) = 0, ~\forall x \in \mathcal{CS}^0 = \mathcal{O}.
\end{equation}

While in the nominal case from~\cite{LMPClinear} the vertices of the convex safe set are a subset of the stored trajectory, the convex safe set from~\eqref{eq:CS} may introduce additional vertices representing the worst case predicted realizations. For this reason, a cost-to-go associated with such predicted worst case realizations should be defined. In the following we define the cost-to-go $J^{j}_{t|t}$ associated with the stored states $x_{t|t}^j = x_t^j$ and the predicted cost-to-go $J^{j}_{k|t}$ associated with the predicted state $x^{j}_{k|t}$ at time $k$. In particular, after completion of the iteration $j$ for $t\in \{0, \ldots, T^j\}$, $k \in \{0, \ldots, N-1\}$ and $i \in \{1,\ldots,l^{k}\}$, we compute the cost-to-go for the vertices $v_{k|t}^{j,i}$ of $\mathcal{CS}^j$ from ${\bf{X}}^j$ in~\eqref{eq:xMatrix} as
\begin{equation}\label{eq:J_definition1}
\begin{aligned}
J^{j}_{k|t} \big( v_{k|t}^{j,i}\big)\! = \! \min_{\boldsymbol{\gamma}_k \geq 0}~~ & h\big( v_{k|t}^{j,i}, \pi_{k|t}^{j,*} (v_{k|t}^{j,i} ) \big) \! + \! \sum_{r=1}^{l^{k+1}} \gamma^r J^{j}_{k+1|t} ( v_{k+1|t}^{j,r} ),\\
\text{s.t.}~~~~ & \sum_{r=1}^{l^{k+1}} \gamma^r  v_{k+1|t}^{j,r} = A v_{k|t}^{j,i} + B  \pi_{k|t}^{j,*} (v_{k|t}^{j,i})\\
& \sum_{r=1}^{l^{k+1}} \gamma^r = 1
\end{aligned}
\end{equation}
where $\boldsymbol{\gamma}_k = [\gamma^1,\ldots, \gamma^{l^{k+1}}]$ and ${\pi}^{j,*}_{k|t}(\cdot)$ is the control policy from~\eqref{eq:policy_j_iteration}. In the above recursion we set
\begin{equation}\label{eq:J_definition2}
    J^{j}_{t+N |t }\big( v_{t+N|t}^{j,i} \big) = Q^{j-1}\big( v_{t+N|t}^{j,i} \big),~\forall i \in \{1,\ldots, l^{N}\}.
\end{equation} 
Basically, the cost-to-go $J^{j}_{k|t} ( v_{k|t}^{j,i})$ at time $k$ is computed summing up the running cost and the interpolated cost-to-go at the next predicted time $k+1$.

Given $Q^{j-1}(\cdot)$, the cost-to-go $J^{i}_{k|t} ( \cdot )$ is computed for all $i\in \{0, \ldots, j\}$, $t \in \{0,\ldots, T^j\}$ and $k\in \{0, \ldots, N\text{-}1\}$. Then, these cost values are collected in the following vector
\begin{equation*}
\begin{aligned}
    {\bf{J}}^{j} = [ {\bf{J}}^{j-1}, &~J_{0|0}^j\big(v_{0|0}^{j,1} \big),\ldots,  J_{N-1|0}^j\big(v_{N-1|0}^{j,l^{N-1}} \big), \ldots\\
 	&~J_{t|t}^j\big(v_{t|t}^{j,1}\big), \ldots,  J_{t+N-1|t}^j\big(v_{t+N-1|t}^{j,l^{N-1}} \big), \ldots\\
 	&~J_{T^j|T^j}^j\big(v_{T^j|T^j}^{j,1} \big), \ldots,  J_{T^j+N-1|T^j}^j\big(v_{T^j+N-1|T^j}^{j, l^{N-1}}\big)],
\end{aligned}
\end{equation*}
where ${\bf{J}}^{0} = [0, \ldots, 0]$ represents the cost-to-go associated with the vertices of $\mathcal{O}$.
Finally, we define the $Q$-function at iteration $j$ which interpolates the cost-to-go over the robust safe set,
\begin{equation}\label{eq:Qfun}
	\begin{aligned}
		Q^j(x) = \min_{ {\bm \lambda}^j \in \Lambda^{j}(x)} & \quad {\bf{J}}^{j} {\bm \lambda}^j,  \\
	\end{aligned}
\end{equation}
where for the matrix ${\bf{X}}^j$ the set
\begin{equation}\label{eq:LambdaSet} 
\begin{aligned}
     \Lambda^{j}(x) = \Big\{{\bm{\lambda}}^j \in  \mathbb{R}^{\text{col}({\bf{X}}^j)} : {\bm{\lambda}}^j \geq 0, ~&{\bf{X}}^j{\bm{\lambda}}^j = x \\&\text{ and } \mathds{1}^\top{\bm{\lambda}}^j = 1 \Big\}
\end{aligned}
\end{equation}
collects the vectors ${\bm{\lambda}}^j$ which can be used to express $x$ as a convex combination of the columns of ${\bf{X}}^j$.

\vspace{0.2cm} \begin{remark}
Notice that a longer task duration $T^j$ results in more data points used to compute the cost vector ${\bf{J}}^{j}$. Consequently, the $Q$-function from~\eqref{eq:Qfun} can only decrease when the duration of the control task $T^j$ is increased. As we will discuss later on, the $Q$-function from~\eqref{eq:Qfun} is an upper-bound of the closed-loop performance of the certainty equivalent system, therefore the upper-bound on the cost can only decrease as the duration of the control task $T^j$ increases.
\end{remark}

\subsection{Set of Safe Policies}\label{sec:dataBasedPolicy}

At this point we have shown how to compute a robust control invariant set and a $Q$-function based on data collected at the $j$th iteration. The last missing element needed for an MPC design is the feedback controller associated with the terminal set. 

Here we show how to construct a set of safe policies $\mathcal{SP}^j$, which may be used to robustly constrain the evolution of system~\eqref{eq:sys} within $\mathcal{CS}^j$, while satisfying state and input constraints~\eqref{eq:constraintSet}. 
We begin by presenting an implicit parametrization of the set of policies $\mathcal{SP}^j$ which is amenable for optimization and it can be used to design a predictive controller that guarantees recursive constraint satisfaction. Afterwards, we define a safe policy $\kappa^{j,*}(\cdot) \in \mathcal{SP}^j$, which is able to complete the task from any state within the robust convex safe set $\mathcal{CS}^j$.

First, we define the matrix ${\bf{U}}^j$ collecting the inputs associated with the data stored in~\eqref{eq:xMatrix},
\begin{equation}\label{eq:uMatrix}
\begin{aligned}
    {\bf{U}}^j \! = \! [{\bf{U}}^{j-1}, &   \pi^{j,*}_{0|0} \big( v_{0|0}^{j,1}\big),\ldots,  \pi^{j,*}_{N-1|0} \big(  v_{N-1|0}^{j,l^{N{-}1}}\big), \ldots\\
 	& \pi^{j,*}_{t|t} \big( v_{t|t}^{j,1}\big), \ldots,   \pi^{j,*}_{t+N-1|t} \big( v_{t+N-1|t}^{j,l^{N-1}}\big), \ldots\\
 	& \pi^{j,*}_{T^j|T^j} \big( v_{T^j|T^j}^{j,1} \big), \ldots,   \pi^{j,*}_{T^j+N-1|T^j} \big( v_{T^j+N-1|T^j}^{j,l^{N-1}}\big)]
\end{aligned}
\end{equation}
where the policies $ \pi^{j,*}_{k|t}$ are defined in~\eqref{eq:policy_j_iteration}.
In the above recursive definition ${\bf{U}}^{0}= [Kv_o^1, \ldots, Kv_{o}^m]$, where the feedback gain $K$ and the vertices $v_o^i$ of $\mathcal{O}$ are defined in Assumption~\ref{ass:O_inf}. 

Now, we notice that by linearity of system \eqref{eq:sys}, if a state $x \in \mathcal{CS}^j$ is expressed as a convex combination of the stored states $x = {\bf{X}}^j {\bm{\lambda}}^j$, then the input $u = {\bf{U}}^j {\bm{\lambda}}^j \in \mathcal{U}$ will keep the evolution of the system within $\mathcal{CS}^j$ for all disturbance realizations. More formally,  given the set $\Lambda^j(\cdot)$ defined in~\eqref{eq:LambdaSet}, we have that 
$\forall x \in \mathcal{CS}^j \subseteq \mathcal{X}, \forall {\bm{\lambda}}^j \in \Lambda^j(x)$
\begin{equation*}
   {\bf{U}}^j {\bm{\lambda}}^j \in \mathcal{U} \text{ and } Ax+B {\bf{U}}^j {\bm{\lambda}}^j +w \in \mathcal{CS}^j, \forall w \in \mathcal{W}.
\end{equation*} 
Therefore, the set of feedback policies $\kappa^j(\cdot): \mathbb{R}^n \rightarrow \mathbb{R}^d$
\begin{equation}\label{eq:feasPolicy}
\begin{aligned}
    \mathcal{SP}^j = \big\{  \kappa^j(\cdot) : \forall x \in \mathcal{CS}^j, \exists & {\bm{\lambda}}^j \in \Lambda^j(x), \\ &\text{ such that }\kappa^j(x) ={\bf{U}}^j {\bm{\lambda}}^j \big\},
\end{aligned}
\end{equation}
guarantees that $\forall \kappa^j(\cdot) \in \mathcal{SP}^j$ the robust convex safe set $\mathcal{CS}^j$ is a robust positive invariant set for the closed-loop system $x_{t+1} = Ax_t+B\kappa^j(x_t)+w_t$. This statement is formalized by the following Proposition~\ref{prop:invariance}.
\begin{proposition}\label{prop:invariance}
Let Assumptions~\ref{ass:first}--\ref{ass:policyAss} hold. Then, for all control policies $\kappa^j(\cdot) \in \mathcal{SP}^j$ and $\forall x \in \mathcal{CS}^j$ we have that
\begin{equation*}
     Ax+B\kappa^j(x)+w \in \mathcal{CS}^j\subseteq \mathcal{X}~ \forall w \in \mathcal{W}
\end{equation*}
and $\kappa^j(x) \in \mathcal{U}$.
\end{proposition}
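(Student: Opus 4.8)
The plan is to reduce the robust invariance of $\mathcal{CS}^j$ under an arbitrary $\kappa^j(\cdot)\in\mathcal{SP}^j$ to a per-column property of the matrices ${\bf{X}}^j$ and ${\bf{U}}^j$, and then to establish that property by induction on the iteration index $j$. Fix $x\in\mathcal{CS}^j$ and $\kappa^j(\cdot)\in\mathcal{SP}^j$; by the definition of $\mathcal{SP}^j$ in~\eqref{eq:feasPolicy} there exists ${\bm{\lambda}}^j\in\Lambda^j(x)$ with $x={\bf{X}}^j{\bm{\lambda}}^j$ and $\kappa^j(x)={\bf{U}}^j{\bm{\lambda}}^j$. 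Writing $v_i$ and $u_i$ for the $i$th columns of ${\bf{X}}^j$ and ${\bf{U}}^j$ and using $\mathds{1}^\top{\bm{\lambda}}^j=1$, I would rewrite
\begin{equation*}
Ax+B\kappa^j(x)+w=\sum_i\lambda_i^j\big(Av_i+Bu_i+w\big),
\end{equation*}
so that the desired inclusion reduces, by convexity of $\mathcal{CS}^j$, to showing that each summand lies in $\mathcal{CS}^j$. Likewise $\kappa^j(x)=\sum_i\lambda_i^j u_i\in\mathcal{U}$ and $x=\sum_i\lambda_i^j v_i\in\mathcal{X}$ follow from convexity of $\mathcal{U}$ and $\mathcal{X}$ once every $u_i\in\mathcal{U}$ and every $v_i\in\mathcal{X}$.

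It therefore suffices to prove the per-column statement $P(j)$: for every column $v_i$ of ${\bf{X}}^j$ with associated input $u_i$ one has $v_i\in\mathcal{X}$, $u_i\in\mathcal{U}$, and $Av_i+Bu_i+w\in\mathcal{CS}^j$ for all $w\in\mathcal{W}$. For the base case $j=0$ the columns of ${\bf{X}}^0$ are the vertices $v_o^i$ of $\mathcal{O}$ with inputs $Kv_o^i$; since $v_o^i\in\mathcal{O}\subseteq\mathcal{X}$, $Kv_o^i\in K\mathcal{O}\subseteq\mathcal{U}$ by admissibility of $\mathcal{O}$, and $\mathcal{O}$ is robust positive invariant for $x_{t+1}=(A+BK)x_t+w_t$, Assumption~\ref{ass:O_inf} yields $Av_o^i+BKv_o^i+w=(A+BK)v_o^i+w\in\mathcal{O}=\mathcal{CS}^0$.

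For the inductive step I would split the columns of ${\bf{X}}^j$ according to the recursion in~\eqref{eq:xMatrix}. The columns inherited from ${\bf{X}}^{j-1}$ satisfy $P(j)$ by the induction hypothesis together with the nesting $\mathcal{CS}^{j-1}\subseteq\mathcal{CS}^j$, which is immediate from~\eqref{eq:CS}. For a newly added vertex $v_{t+k|t}^{j,i}$ of $\mathcal{R}_{t\rightarrow t+k}(x_t^j,{\boldsymbol{\pi}}^{j,*}_t)$ with $k\in\{0,\dots,N-1\}$, the associated input is $\pi_{k|t}^{j,*}(v_{t+k|t}^{j,i})$; Assumption~\ref{ass:policyAss} guarantees that these reachable-tube states and inputs robustly satisfy~\eqref{eq:constraintSet}, giving $v_{t+k|t}^{j,i}\in\mathcal{X}$ and $\pi_{k|t}^{j,*}(v_{t+k|t}^{j,i})\in\mathcal{U}$. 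The crux is the successor inclusion: since $\mathcal{R}_{t\rightarrow t+k+1}=\Succ(\mathcal{R}_{t\rightarrow t+k},\mathcal{W},{\boldsymbol{\pi}}^{j,*}_t)$ and $\mathcal{W}=\text{Conv}\{v_w^1,\dots,v_w^l\}$, each point $Av_{t+k|t}^{j,i}+B\pi_{k|t}^{j,*}(v_{t+k|t}^{j,i})+v_w^r$ is a vertex of $\mathcal{R}_{t\rightarrow t+k+1}$, so for general $w\in\mathcal{W}$ the successor is a convex combination of these vertices and hence lies in $\mathcal{R}_{t\rightarrow t+k+1}$. Finally $\mathcal{R}_{t\rightarrow t+k+1}\subseteq\mathcal{CS}^j$: for $k+1\leq N-1$ its vertices are themselves columns of ${\bf{X}}^j$, while for $k+1=N$ Assumption~\ref{ass:policyAss} gives $\mathcal{R}_{t\rightarrow t+N}\subseteq\mathcal{CS}^{j-1}\subseteq\mathcal{CS}^j$. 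This closes the induction and establishes $P(j)$, and with it the proposition.

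I expect the delicate step to be the successor/vertex bookkeeping: justifying that propagating a single reachable-set vertex through the dynamics and through the disturbance vertices produces exactly the vertices of the next reachable set (so that convex combinations remain inside it), and separately handling the terminal layer $k=N$, where one must rely on Assumption~\ref{ass:policyAss} and the nesting $\mathcal{CS}^{j-1}\subseteq\mathcal{CS}^j$ rather than on columns of ${\bf{X}}^j$. Everything else is routine convexity of $\mathcal{X}$, $\mathcal{U}$, and $\mathcal{CS}^j$.
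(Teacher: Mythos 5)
Your proof is correct and follows essentially the same route as the paper's: reduce the claim to the per-column property that every stored vertex/input pair maps into $\mathcal{CS}^j$ under all disturbances, then extend to arbitrary convex combinations by linearity of the dynamics and convexity of $\mathcal{CS}^j$, $\mathcal{X}$, $\mathcal{U}$. The only difference is that the paper asserts the per-column property in one line by citing Assumption~\ref{ass:policyAss}, whereas you justify it explicitly by induction on $j$ (handling the $\mathcal{O}$-columns via Assumption~\ref{ass:O_inf}, the inherited columns via the nesting $\mathcal{CS}^{j-1}\subseteq\mathcal{CS}^j$, and the terminal layer $k+1=N$ separately), which fills in detail the paper leaves implicit.
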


\begin{proof}
The proof can be found in the Appendix.
\end{proof}

Finally, we define the safe policy
\begin{equation}\label{eq:theSafePolicy}
    \kappa^{j,*}(x) = {\bf{U}}^j {\bm{\lambda}}^{j,*}(x),
\end{equation}
where ${\bm{\lambda}}^{j,*}(x)$ is the minimizer in~\eqref{eq:Qfun}. Basically, the above safe policy evaluated at $x$ is given by the convex combination of stored inputs, for the multipliers ${\bm{\lambda}}^{j,*}(x)$ which define the robust $Q$-function at $x$. In the following propositions, we show that the $Q$-function is a Lyapunov function for the certainty equivalent closed-loop system~\eqref{eq:disturbanceFreeSys} and \eqref{eq:theSafePolicy}. Furthermore, we show that the policy~\eqref{eq:theSafePolicy} in closed-loop with system~\eqref{eq:sys} guarantees Input-to-State Stability (ISS).

\begin{proposition}\label{prop:QfunDecrease}
Let Assumptions~\ref{ass:first}--\ref{ass:policyAss} hold. Consider the $Q$-function $Q^j(\cdot)$ in~\eqref{eq:Qfun}, we have that for all $x \in \mathcal{CS}^j$
\begin{equation}
    Q^j(x) \geq h(x, \kappa^{j,*}(x)) + Q^j(Ax+B \kappa^{j,*}(x)),
\end{equation}
where $\kappa^{j,*}(\cdot)$ is the safe policy defined in~\eqref{eq:theSafePolicy}.
\end{proposition}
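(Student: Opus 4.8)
The plan is to exploit the Bellman-type recursion \eqref{eq:J_definition1}--\eqref{eq:J_definition2} that defines the entries of ${\bf J}^j$, together with the convexity of $h$, to lift a one-step decrease at each stored vertex to the claimed decrease at an arbitrary $x\in\mathcal{CS}^j$. First I would fix $x\in\mathcal{CS}^j$ and let ${\bm\lambda}^{j,*}={\bm\lambda}^{j,*}(x)$ be a minimizer in \eqref{eq:Qfun}, so that ${\bf X}^j{\bm\lambda}^{j,*}=x$, $\mathds{1}^\top{\bm\lambda}^{j,*}=1$, ${\bm\lambda}^{j,*}\ge 0$, $Q^j(x)={\bf J}^j{\bm\lambda}^{j,*}$ and, by \eqref{eq:theSafePolicy}, $\kappa^{j,*}(x)={\bf U}^j{\bm\lambda}^{j,*}$. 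Writing $v_c$, $u_c$ and $J_c$ for the $c$-th column of ${\bf X}^j$, ${\bf U}^j$ and entry of ${\bf J}^j$, linearity of the certainty-equivalent dynamics gives the successor $x^+:=Ax+B\kappa^{j,*}(x)=\sum_c\lambda^{j,*}_c\,(Av_c+Bu_c)$.

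The key step is a per-column one-step backup: for every column $c$ I would produce a multiplier $\nu_c\in\Lambda^j(Av_c+Bu_c)$ with ${\bf J}^j\nu_c\le J_c-h(v_c,u_c)$, i.e. a certificate that $Q^j(Av_c+Bu_c)\le J_c-h(v_c,u_c)$. For a column that is a reachable-set vertex $v_{k|t}^{j,i}$ with $k\le N-1$, such a $\nu_c$ is exactly the optimizer ${\bm\gamma}_k$ of the defining program \eqref{eq:J_definition1}: it satisfies ${\bf X}^j\nu_c=Av_c+Bu_c$, $\mathds{1}^\top\nu_c=1$, $\nu_c\ge0$, and ${\bf J}^j\nu_c=J_c-h(v_c,u_c)$ by construction. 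Feasibility of \eqref{eq:J_definition1} relies on $0\in\mathcal{W}$ (Assumption~\ref{ass:first}), which guarantees that the disturbance-free successor $Av_{k|t}^{j,i}+B\pi^{j,*}_{k|t}(v_{k|t}^{j,i})$ lies in the $(k{+}1)$-step robust reachable set and hence in the convex hull of its vertices. At the terminal layer $k=N-1$ the successor vertices belong to $\mathcal{CS}^{j-1}$; there I would compose ${\bm\gamma}_{N-1}$ with the $Q^{j-1}$-optimal multipliers over ${\bf X}^{j-1}\subseteq{\bf X}^j$ and use \eqref{eq:J_definition2} to obtain $\nu_c$ with the same cost identity. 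The identical certificate is available for every column inherited from earlier iterations, since each such non-$\mathcal{O}$ column was created by \eqref{eq:J_definition1} at some earlier iteration and its $\nu_c$ is supported on columns already present in ${\bf X}^j$.

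With these certificates in hand, I would aggregate them: setting $\tilde{\bm\lambda}=\sum_c\lambda^{j,*}_c\,\nu_c\ge0$ yields ${\bf X}^j\tilde{\bm\lambda}=\sum_c\lambda^{j,*}_c(Av_c+Bu_c)=x^+$ and $\mathds{1}^\top\tilde{\bm\lambda}=\sum_c\lambda^{j,*}_c=1$, so $\tilde{\bm\lambda}\in\Lambda^j(x^+)$ is feasible in \eqref{eq:Qfun}. Hence $Q^j(x^+)\le{\bf J}^j\tilde{\bm\lambda}\le\sum_c\lambda^{j,*}_c\big(J_c-h(v_c,u_c)\big)=Q^j(x)-\sum_c\lambda^{j,*}_c\,h(v_c,u_c)$. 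Finally, Jensen's inequality applied to the jointly convex $h$ (Assumption~\ref{ass:cost}), with weights $\lambda^{j,*}_c$ summing to one, gives $\sum_c\lambda^{j,*}_c\,h(v_c,u_c)\ge h\!\big(\sum_c\lambda^{j,*}_c v_c,\ \sum_c\lambda^{j,*}_c u_c\big)=h(x,\kappa^{j,*}(x))$; substituting and rearranging produces $Q^j(x)\ge h(x,\kappa^{j,*}(x))+Q^j(Ax+B\kappa^{j,*}(x))$, which is the assertion.

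I expect the main obstacle to be the per-column backup above, and within it the degenerate case of the terminal set. For the vertices of $\mathcal{O}$ the recursion \eqref{eq:J_definition1} does not apply ($J_c=0$ by \eqref{eq:Qfun_init}), and since $Q^j\equiv0$ on $\mathcal{O}$ the inequality $Q^j(Av_c+Bu_c)\le J_c-h(v_c,u_c)$ can hold only if the stage cost vanishes on $\mathcal{O}\times K\mathcal{O}$; to close this base case I would have to lean on Assumption~\ref{ass:cost} together with the robust invariance of $\mathcal{O}$ (Assumption~\ref{ass:O_inf}), which, again using $0\in\mathcal{W}$, ensures $Av_c+Bu_c=(A+BK)v_c\in\mathcal{O}$. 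The other delicate bookkeeping is the terminal-layer composition, where one must verify that folding the successor vertices back onto ${\bf X}^{j-1}$ preserves both membership in $\Lambda^j(\cdot)$ and the cost identity coming from \eqref{eq:J_definition2}.
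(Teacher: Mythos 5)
Your proof is correct and follows essentially the same route as the paper's: a per-column Bellman inequality extracted from \eqref{eq:J_definition1}--\eqref{eq:J_definition2} (the paper's inequality \eqref{eq:Qwritten_1}, which bounds each entry of ${\bf J}^j$ by the stage cost plus $Q^j$ at the disturbance-free successor), followed by aggregation over the optimal multipliers ${\bm\lambda}^{j,*}$ using convexity of $h$ and of $Q^j$. Your explicit certificate $\tilde{\bm\lambda}=\sum_c\lambda^{j,*}_c\nu_c$ and your careful handling of the terminal layer, the inherited columns, and the $\mathcal{O}$-vertices (where the paper instead invokes the trivial $j=0$ base case and treats convexity of $Q^j$ as a black box) are more detailed renderings of the same steps rather than a different argument.
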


\begin{proof}
The proof can be found in the Appendix.
\end{proof}

\begin{proposition}
\label{prop:safePolicyISS}
Consider system~\eqref{eq:sys} in closed-loop with the safe policy~\eqref{eq:theSafePolicy}. Let Assumptions~\ref{ass:first}--\ref{ass:policyAss} hold and assume that $x_0 \in \mathcal{CS}^j$, then the closed-loop system~\eqref{eq:sys} and \eqref{eq:theSafePolicy} is input to state stable for the robust positive invariant set $\mathcal{O}$.
\end{proposition}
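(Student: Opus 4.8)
The plan is to use the robust $Q$-function $Q^j(\cdot)$ from \eqref{eq:Qfun} as an ISS-Lyapunov function for the closed-loop system \eqref{eq:sys}, \eqref{eq:theSafePolicy} with respect to $\mathcal{O}$, and then invoke a standard discrete-time ISS-Lyapunov characterization (as in the ISS references cited for Definition~\ref{def:iss}) to obtain the $\mathcal{KL}$/$\mathcal{K}$ bound. Since $Q^j$ is the optimal value of the parametric linear program \eqref{eq:Qfun}--\eqref{eq:LambdaSet}, whose parameter $x$ enters only the right-hand side of the equality constraints, $Q^j$ is convex and piecewise affine on the compact polytope $\mathcal{CS}^j$; in particular it is continuous and globally Lipschitz on $\mathcal{CS}^j$ with some constant $L>0$. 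The ingredients I need are: (i) a sandwich $\alpha_1(|x|_\mathcal{O}) \le Q^j(x) \le \alpha_2(|x|_\mathcal{O})$ with $\alpha_1,\alpha_2 \in \mathcal{K}_\infty$; (ii) a dissipation inequality for the disturbed dynamics; and (iii) invariance of the domain $\mathcal{CS}^j$, which Proposition~\ref{prop:invariance} already supplies (note $\kappa^{j,*}\in\mathcal{SP}^j$ by construction).

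For the sandwich bounds I would first note that $Q^j$ vanishes on $\mathcal{O}$: every $x\in\mathcal{O}$ is a convex combination of the vertices of $\mathcal{O}$, which are columns of $\mathbf{X}^j$ carrying cost $0$ in $\mathbf{J}^j$, so $Q^j(x)\le 0$; together with $Q^j\ge 0$ (all running costs and the terminal $Q^0$ being nonnegative) this gives $Q^j|_{\mathcal{O}}=0$. Continuity of $Q^j$ and compactness of $\mathcal{CS}^j$ then yield an upper bound $Q^j(x)\le \alpha_2(|x|_\mathcal{O})$ with $\alpha_2\in\mathcal{K}_\infty$. For the lower bound I would combine Proposition~\ref{prop:QfunDecrease} with $Q^j\ge 0$ to get $Q^j(x)\ge h(x,\kappa^{j,*}(x))$, and then bound the running cost along the safe policy below by a $\mathcal{K}_\infty$ function of $|x|_\mathcal{O}$, which furnishes $\alpha_1$.

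For the dissipation inequality, write the nominal successor $\bar x^+ = Ax+B\kappa^{j,*}(x)$ and the disturbed successor $x^+ = \bar x^+ + w$. Since $0\in\mathcal{W}$ by Assumption~\ref{ass:first} and by Proposition~\ref{prop:invariance}, both $\bar x^+$ and $x^+$ lie in $\mathcal{CS}^j$, so Lipschitz continuity gives $Q^j(x^+)\le Q^j(\bar x^+)+L\|w\|$. Combining this with the nominal descent $Q^j(\bar x^+)\le Q^j(x)-h(x,\kappa^{j,*}(x))$ of Proposition~\ref{prop:QfunDecrease}, I obtain $Q^j(x^+)-Q^j(x)\le -\alpha_3(|x|_\mathcal{O})+L\|w\|$, where $\alpha_3\in\mathcal{K}_\infty$ is the same lower bound on the running cost used above and $\sigma(\|w\|)=L\|w\|$ is class-$\mathcal{K}$. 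This is exactly an ISS-Lyapunov dissipation inequality on the invariant domain $\mathcal{CS}^j$, and the standard comparison/telescoping argument then produces the class-$\mathcal{KL}$ term $\beta(|x_0|_\mathcal{O},t+1)$ and the class-$\mathcal{K}$ gain $\gamma$ of Definition~\ref{def:iss}, for all $x_0\in\mathcal{CS}^j$.

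The main obstacle I anticipate is the lower bound, i.e.\ producing a single $\mathcal{K}_\infty$ function with $h(x,\kappa^{j,*}(x))\ge\alpha_3(|x|_\mathcal{O})$: Assumption~\ref{ass:cost} controls only $h(\cdot,0)$ and $h(0,\cdot)$ on the coordinate ``axes,'' whereas the running cost here is evaluated along the a priori unknown input $\kappa^{j,*}(x)$. I would resolve this by exploiting joint convexity and positive-definiteness of $h$ with respect to $(\mathcal{O},K\mathcal{O})$, for instance through the partial minimization $g(x)=\min_u h(x,u)$, which is convex, vanishes on $\mathcal{O}$ (since $h(x,0)=0$ there), and hence is lower bounded by a $\mathcal{K}_\infty$ function of $|x|_\mathcal{O}$ on the compact set $\mathcal{CS}^j$, giving $h(x,\kappa^{j,*}(x))\ge g(x)\ge\alpha_3(|x|_\mathcal{O})$. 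A secondary point to verify is that the Lipschitz constant $L$ is uniform over $\mathcal{CS}^j$, which follows from $Q^j$ being convex and piecewise affine with finitely many pieces on a bounded polytope.
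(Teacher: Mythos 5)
Your proposal is correct and follows essentially the same route as the paper's proof: $Q^j$ serves as the ISS-Lyapunov function, Lipschitz on $\mathcal{CS}^j$ because it is the value of a parametric LP, with the disturbed decrease split into the Lipschitz deviation $L\|w\|$ plus the nominal descent of Proposition~\ref{prop:QfunDecrease}, and the sandwich bounds obtained from Assumption~\ref{ass:cost} and continuity. The one step the paper makes explicit that you skip is that $\mathcal{O}$ itself is robust positive invariant under $\kappa^{j,*}$ (on $\mathcal{O}$ one has $Q^j=0$ and $\kappa^{j,*}(x)=Kx$, so Assumption~\ref{ass:O_inf} applies), which Definition~\ref{def:iss} presupposes; the lower-bound obstacle you flag is treated no more rigorously in the paper, which simply asserts $\alpha_1(|x|_\mathcal{O})\le h(x,0)\le Q^j(x)$.
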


\begin{proof}
The proof can be found in the Appendix.
\end{proof}


\section{Learning Model Predictive Control}\label{sec:LMPCpolicy}
This section introduces the iterative control design procedure. 
At the end of iteration $j-1$, we collect a data set of costs, inputs, and states, which are used to construct the robust convex safe set and robust $Q$-function at iteration $j-1$, as described in Section~\ref{sec:LMPCpre}. Finally, we leverage these quantities to design a robust Learning Model Predictive Controller (LMPC) for the $j$th iteration. The LMPC policy is able to safely execute the control task at the $j$th iteration and it can be used to collect new closed-loop data to design the controller at the next iteration $j+1$.

\subsection{Policy Synthesis}
In this section, we introduce the LMPC policy. 
For more details on the control design choices and the differences with standard MPC design, we refer to the discussion in Section~\ref{sec:discussion} and to the properties description in Section~\ref{sec:properties}.

We define the following optimal control problem for the state $x_t^j \in \mathbb{R}^n$ and the parameter $N_t^j \in \{0,\ldots, N\}$:
\begin{subequations}\label{eq:FTOCP}
\begin{align}
&\!\!\!\!\!\!\!\!\!\!\!\!\!\!\!C_{t\rightarrow t+N}^{\scalebox{0.4}{LMPC},j}(x_t^j, N_t^j) =  \notag\\
\min_{\substack{\boldsymbol{M}_t^j,\\ {\bm{\lambda}}^j_{t}, \boldsymbol{g}_t^j } } ~ & \sum_{k=t}^{t+N-1}  h(\bar x^j_{k|t}, \pi^j_{k|t}(\bar x^j_{k|t})) + Q^{j-1}(\bar x^j_{t+N|t}) \label{eq:FTOPC_Cost} \\
\textrm{s.t. }~\!~
&   x^j_{t|t}= \bar x^j_{t|t}=x_t^j,\label{eq:FTOPC_Init} \\
&  \bar x^j_{k+1|t}=A \bar x^j_{k|t}+B \pi^j_{k|t}(\bar x^j_{k|t}), \label{eq:FTOCP_dynNom}\\
&   x^j_{k+1|t}=A x^j_{k|t}+B \pi^j_{k|t}(x^j_{k|t}) + w^j_{k|t}, \label{eq:FTOCP_dyn}\\
&   x^j_{k|t} \in \mathcal{X},~~\pi^j_{k|t}(x^j_{k|t}) \in \mathcal{U},  \label{eq:FTOCP_InpCons}\\
&   x^j_{t+N|t} \in ~\mathcal{CS}^{j-1}, \label{eq:FTOCP_TermCons}\\
&   \pi^{\text{d}}_{k|t}(x_{k|t}) = \textstyle\sum_{s=0}^{k-t-1} M^j_{ks|t}w^j_{s|t} + g^j_{k|t}, \label{eq:FTOCP_DFpolicy} \\
&   \kappa_{k|t}^{j-1}(x^j_{k|t}) =  {\bf{U}}^{j-1} {\bm{\lambda}}^j_{k|t}, \label{eq:FTOCP_DataBasedpolicy1}\\
&   {\bm{\lambda}}^j_{k|t} \in \Lambda^{j-1}(x_{k|t}^j), \label{eq:FTOCP_DataBasedpolicy2} \\
&   \pi^j_{i|t}(x^j_{i|t}) = \pi^{\text{d}}_{i|t}(x_{i|t}),\forall i\in \{t,...,t+N_t^j-1 \}, \label{eq:FTOCP_policy1} \\
&  \pi^j_{i|t}(x^j_{i|t}) = \kappa_{i|t}^{j-1}(x^j_{i|t}), \forall i\in \{t+N_t^j, ..., t\!+\!N\!-\!1\},\label{eq:FTOCP_policy2} \\
&   \forall w^j_{k|t} \in \mathcal{W}, ~ \forall k = \{t, ..., t+N-1\}, \notag
\end{align}
\end{subequations}
where the optimization variables are
\begin{equation*}
    \boldsymbol{M}_t^j = \begin{bmatrix} 0 & \ldots \\
    M_{21|t}& 0 & \ldots \\
    M_{31|t}& M_{32|t} &  \ddots \\
    \vdots
    \end{bmatrix},~\boldsymbol{g}_t^j = \begin{bmatrix} g^j_{t|t}\\
    \vdots \\
    g^j_{t+N-1|t},
    \end{bmatrix}
\end{equation*}
${\bm{\lambda}}^j_{t} = [{\bm{\lambda}}^j_{t|t},\ldots, {\bm{\lambda}}^j_{t+N-1|t}]$. 
Equation~\eqref{eq:FTOCP_policy2} and the parameter $N_t^j$ describe the control policy which defines the evolution of the predicted nominal and uncertain trajectories in \eqref{eq:FTOCP_dynNom}--\eqref{eq:FTOCP_dyn}. In particular, for the first $N_t^j$ predicted time steps the control policy $\pi_{k|t}^j(\cdot)$ equals the disturbance feedback policy~\eqref{eq:FTOCP_DFpolicy}, and for the last $N-N_t^j$ predicted steps $\pi_{k|t}^j(\cdot)$ equals the safe feedback policy~\eqref{eq:FTOCP_DataBasedpolicy1}--\eqref{eq:FTOCP_DataBasedpolicy2}.
Equations \eqref{eq:FTOPC_Init}--\eqref{eq:FTOCP_InpCons} represent input and state constraints which must be satisfied robustly for all disturbance realizations. 
For $k \in \{t, \ldots, N_t^j-1\}$, the robust state and input constraints are reformulated using the  disturbance feedback policy~\eqref{eq:FTOCP_DFpolicy} \cite{goulart2006optimization}. 
On the other hand, from Proposition~\ref{prop:invariance}, we have that for time $k\in\{ N_t^j, \ldots, N \}$ the safe policy~\eqref{eq:FTOCP_DataBasedpolicy1} guarantees robust constraints satisfaction by construction. 
Finally, the terminal constraint \eqref{eq:FTOCP_TermCons} robustly enforces $x_{t+N|t}$ within the robust control invariant set $\mathcal{CS}^{j-1}$. 

The finite time optimal control problem~\eqref{eq:FTOCP} is used to define the LMPC policy from Algorithm~\ref{LMPCpolicy}. Given the measured state $x_t^j$, Algorithm~\ref{LMPCpolicy} solves $N+1$ instances of Problem~\eqref{eq:FTOCP} and it returns the optimal robust $N$-steps policy at time $t$
\begin{equation}\label{eq:optNstepPolicy}
	{\boldsymbol{\pi}_t^{j,*}(\cdot)=[\pi^{j,*}_{t|t}(\cdot), \ldots, \pi^{j,*}_{t+N-1|t}(\cdot)]}
\end{equation} 
and the LMPC cost $J_{t\rightarrow t+N}^{\scalebox{0.4}{LMPC},j}(x_t^j)$. Then, we apply to system~\eqref{eq:sys}
\begin{equation}\label{eq:MPCpolicy}
    u_t^j = \pi^{j,*}_{t|t}(x_t).
\end{equation}
Algorithm~\ref{LMPCpolicy} is resolved at time $t+1$, based on the new state $x_{t+1|t+1} = x_{t+1}^j$, yielding a \textit{moving} or \textit{receding horizon} control strategy. Finally, the control policy~\eqref{eq:optNstepPolicy} is used to compute the robust convex safe set $\mathcal{CS}^j$ as discussed in Section~\ref{sec:LMPCpre}.

\begin{algorithm}[h!] 
	\SetAlgoLined
	\textbf{Input:} System's state $x_t^j$\\
	Set $N_t^{j,*} = \textstyle \text{argmin}_{N_t^j \in \{0, \ldots, N\} } C_{t\rightarrow t+N}^{\scalebox{0.4}{LMPC},j}(x_t^j, N_t^j) $ \\
	Let $\boldsymbol{\pi}_t^{j,*}(\cdot)=[\pi^{j,*}_{t|t}(\cdot), \ldots, \pi^{j,*}_{t+N-1|t}(\cdot)]$ be the optimal solution to problem  $C_{t\rightarrow t+N}^{\scalebox{0.4}{LMPC},j}(x_t^j,N_t^{j,*})$\\
    Set $J_{t\rightarrow t+N}^{\scalebox{0.4}{LMPC},j}(x_t) = \textstyle \text{min}_{N_t^j \in \{0, \ldots, N\} } C_{t\rightarrow t+N}^{\scalebox{0.4}{LMPC},j}(x_t^j, N_t^j)$\\
	\textbf{Output:} $\boldsymbol{\pi}_t^{j,*}(\cdot)$ and $J_{t\rightarrow t+N}^{\scalebox{0.4}{LMPC},j}(x_t^j)$
	\caption{LMPC Algorithm}
	\label{LMPCpolicy}
\end{algorithm}

The above algorithm solves different instances of Problem~\eqref{eq:FTOCP} to synthesize the control policy at iteration $j$. In particular, in line~2 we first solve Problem~\eqref{eq:FTOCP} for all $N_t^j \in \{0,\ldots, N\}$ and then set $N_t^{j,*}$ equal to the parameter which led to the minimum open-loop cost $C_{t\rightarrow t+N}^{\scalebox{0.4}{LMPC},j}(x_t^j, N_t^{j,*})$. Afterwards, in lines $3$-$4$ we set the optimal control policy $\boldsymbol{\pi}_t^{j,*}(\cdot)$ and the open-loop cost $J_{t\rightarrow t+N}^{\scalebox{0.4}{LMPC},j}(x_t)$ equal to the optimal policy and the open-loop cost of problem $C_{t\rightarrow t+N}^{\scalebox{0.4}{LMPC},j}(x_t^j, N_t^{j,*})$, respectively.

\vspace{0.2cm} \begin{remark}
The optimal control problem~\eqref{eq:FTOCP} computes a sequence of policies ${\boldsymbol{\pi}_t^{j,*}(\cdot)=[\pi^{j,*}_{t|t}(\cdot), \ldots, \pi^{j,*}_{t+N-1|t}(\cdot)]}$, which robustly steer system~\eqref{eq:sys} from $x_t^j$ to the robust convex safe set $\mathcal{CS}^{j-1}$. 
Notice that if at iteration $j-1$ more data is collected, then the robust convex safe set $\mathcal{CS}^{j-1}$ from~\eqref{eq:CS} may be enlarged. As a result, when the tasks are defined over longer time periods $T^j$, the proposed iterative control design procure may synthesize control policies with larger regions of attraction.
\end{remark}

\subsection{Design Choices}\label{sec:discussion}
In standard robust MPC at each time step we solve an optimal control problem over a fixed space of feedback policies. On the other hand, in Problem~\eqref{eq:FTOCP} the space of feedback policies changes as a function of the predicted time step $k$. In particular, the predicted trajectory is computed using a disturbance feedback policy for $k \leq N_t^j$ and a safe feedback policy~\eqref{eq:FTOCP_DataBasedpolicy1}--\eqref{eq:FTOCP_DataBasedpolicy2} for $k>N_t^j$. In the following we discuss why this strategy allows us to guarantee recursive constraint satisfaction.

Recall that in predictive control recursive constraint satisfaction is ensured using a terminal constraint set. In particular, the terminal constraint set should be (robust) control invariant, for a feedback policy that can be used by the (robust) MPC to forecast the evolution of the system~\cite{BorrelliBemporadMorari_book}. 
Notice that a disturbance feedback policy (or equivalently an affine state feedback policy~\cite{goulart2006optimization}) may not be able to robustly constrain the evolution of the system within the terminal constraint set $\mathcal{CS}^j$. 
For this reason, in Problem~\eqref{eq:FTOCP} we used a time-varying feedback policy, which is defined by the parameter $N_t^j$, and in Algorithm~\ref{LMPCpolicy} we solved Problem~\eqref{eq:FTOCP} for different values of $N_t^j$. This strategy guarantees that the safe policy can be used to robustly constrain the evolution of the predicted system within the robust safe set $\mathcal{CS}^j$, and it is used in Theorem~\ref{th:recFeas} to show that the LMPC~\eqref{eq:FTOCP} and~\eqref{eq:MPCpolicy} guarantees recursive feasibility.

Furthermore, we comment on the computational tractability of the proposed strategy. As already mentioned, Algorithm~\ref{LMPCpolicy} solves $N+1$ instances of Problem~\eqref{eq:FTOCP} to forecast the evolution of the system using either the disturbance feedback policy or the safe policy from Section~\ref{sec:dataBasedPolicy}. We underline that these $N+1$ optimal control problems are independent and can be solved in parallel. Therefore, when parallel computing is available, the online computational complexity of the proposed strategy is independent on the controller horizon.

Finally, we emphasize the differences between the linear control policy defined by the feedback gain $K$ from Assumption~\ref{ass:O_inf},  the safe control policy $\kappa^{j,*}(\cdot)$ in~\eqref{eq:theSafePolicy}, and the LMPC policy~\eqref{eq:MPCpolicy}. We notice that the linear feedback policy $\pi^{\mathrm{linear}}(x) = K x$ may be used to robustly constrain the evolution of the system within the robust positive invariant set $\mathcal{O}$, which may be a small neighborhood of the origin. On the other hand, the safe control policy $\kappa^{j,*}(\cdot)$ is defined on the robust convex safe set $\mathcal{CS}^j \subseteq \mathcal{O}$ and it may be used to steer the system from any state $x \in \mathcal{CS}^j$ to the neighborhood of the origin $\mathcal{O}$, as shown in the result section. Finally, the LMPC policy~\eqref{eq:MPCpolicy} is defined on a larger domain $\mathcal{C}^j \subseteq \mathcal{CS}^j$ and it may be used to enlarge the robust convex safe set by collecting new data, when the initial condition $x_0^j$ is selected on the boundary of $\mathcal{C}^j$. A strategy to select such initial condition is described in Section~\ref{sec:regAtt} and its efficacy is demonstrated in the result Section~\ref{sec:resIterativeEnlargement}.

\section{Properties}\label{sec:properties}
This section shows that the LMPC policy~\eqref{eq:MPCpolicy} satisfies our design requirements from Section~\ref{sec:Obj}.

\subsection{Recursive Feasibility}
We show that if Problem~\eqref{eq:FTOCP} is feasible at time $t=0$ for some $N_0^j \in \{0, \ldots, N\}$, then the LMPC policy~\eqref{eq:MPCpolicy} guarantees that state and input constraints are recursively satisfied. 
First, we define the set
\begin{equation}\label{eq:feasRegion}
    \mathcal{C}^j = \{ x \in \mathbb{R}^n : \exists N_0^j \in \{0, \ldots, N\}, C_{0\rightarrow N}^{\scalebox{0.5}{LMPC},j}(x, N_0^j) < \infty \}
\end{equation}
collecting the states from which Problem~\eqref{eq:FTOCP} is feasible for some $N_0^j \in \{0, \ldots, N\}$. Next, we show that if $x_0^j \in \mathcal{C}^j$, then Problem~\eqref{eq:FTOCP} is feasible for all time $t\geq 0$ and for some $N_t^j \in \{0, \ldots, N\}$.

\begin{theorem}\label{th:recFeas}
Consider the closed-loop system~\eqref{eq:sys} and \eqref{eq:MPCpolicy}. Let Assumptions~\ref{ass:first}--\ref{ass:cost} hold and $x_0^j \in \mathcal{C}^j$. Then, for all time $t\geq0$ Problem~\eqref{eq:FTOCP} is feasible for some $N_t^j \in \{0, \ldots, N\}$, and the closed-loop system~\eqref{eq:sys} and \eqref{eq:MPCpolicy} satisfies state and input constraints.
\end{theorem}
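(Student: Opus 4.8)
The plan is to prove recursive feasibility by induction on the time index $t$, exhibiting an explicit feasible candidate for Problem~\eqref{eq:FTOCP} at time $t+1$ built from the optimizer at time $t$. The base case is immediate: since $x_0^j \in \mathcal{C}^j$, the definition~\eqref{eq:feasRegion} of $\mathcal{C}^j$ guarantees that Problem~\eqref{eq:FTOCP} is feasible for some $N_0^j \in \{0,\ldots,N\}$. For the inductive step, I would assume feasibility at time $t$ with optimal parameter $N_t^{j,*}$ and optimal variables $(\boldsymbol{M}_t^j, \boldsymbol{g}_t^j, {\bm{\lambda}}_t^j)$, then observe the realized successor $x_{t+1}^j = A x_t^j + B \pi_{t|t}^{j,*}(x_t^j) + w_t^j$ produced by the applied input~\eqref{eq:MPCpolicy}, which depends on the realized disturbance $w_t^j \in \mathcal{W}$.

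First I would treat the case $N_t^{j,*} \geq 1$. The candidate at time $t+1$ sets $N_{t+1}^j = N_t^{j,*}-1$ and shifts the two-phase policy by one step. For the disturbance-feedback phase~\eqref{eq:FTOCP_DFpolicy}, the now-known $w_t^j$ is folded into the affine offsets: the new $\boldsymbol{g}$ absorbs the first gain block of $\boldsymbol{M}_t^j$ acting on $w_t^j$, while the remaining blocks are inherited in shifted positions (the standard affine-in-the-disturbance construction of~\cite{goulart2006optimization}). Because the original policy robustly satisfied~\eqref{eq:FTOCP_InpCons} for \emph{all} disturbance sequences, the shifted policy conditioned on the realized $w_t^j$ still satisfies them robustly over the shortened horizon. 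The critical alignment is that the switching instant is preserved: the original switch occurs at predicted step $t+N_t^{j,*}$ and the candidate switch occurs at $(t+1)+(N_t^{j,*}-1)=t+N_t^{j,*}$, the \emph{same} step, at which the reached state lies robustly in $\mathcal{CS}^{j-1}$. From there the safe-policy phase~\eqref{eq:FTOCP_DataBasedpolicy1}--\eqref{eq:FTOCP_DataBasedpolicy2} is simply extended by one step to fill the horizon; by Proposition~\ref{prop:invariance} each application of a policy in the safe set keeps the state inside $\mathcal{CS}^{j-1}$ for every $w \in \mathcal{W}$ and respects the input constraint, so the new terminal state satisfies $x_{t+N+1|t+1} \in \mathcal{CS}^{j-1}$, verifying~\eqref{eq:FTOCP_TermCons}. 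Existence of admissible multipliers for these steps follows from the vertex representation~\eqref{eq:RS_vertex} of $\mathcal{CS}^{j-1}$, so I need not track specific values of ${\bm{\lambda}}$.

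Next I would dispatch the case $N_t^{j,*}=0$, where the safe policy is used over the whole horizon and hence $x_t^j \in \mathcal{CS}^{j-1}$. By Proposition~\ref{prop:invariance}, $x_{t+1}^j \in \mathcal{CS}^{j-1}$ for every $w_t^j \in \mathcal{W}$, so I set $N_{t+1}^j=0$ and apply the safe policy throughout; repeated application of Proposition~\ref{prop:invariance} yields robust satisfaction of the state, input, and terminal constraints. In both cases a feasible candidate exists at time $t+1$, so the minimization in Algorithm~\ref{LMPCpolicy} returns a finite cost and, since the first-stage constraints~\eqref{eq:FTOCP_InpCons} are enforced robustly, the applied input~\eqref{eq:MPCpolicy} keeps $\pi_{t|t}^{j,*}(x_t^j)\in\mathcal{U}$ and $x_{t+1}^j \in \mathcal{X}$. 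The induction then delivers feasibility and constraint satisfaction for all $t \geq 0$.

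I expect the main obstacle to be the disturbance-feedback shift and its handoff to the safe policy, which is precisely where the standard shifting argument breaks down. The terminal set $\mathcal{CS}^{j-1}$ is robust control invariant only under the data-based safe policy of~\eqref{eq:feasPolicy}, and \emph{not} under an affine disturbance-feedback law; consequently one cannot prolong the disturbance-feedback phase by a step to absorb the extra horizon length. The time-varying parametrization by $N_t^j$ resolves this: decrementing $N_t^j$ shrinks the disturbance-feedback phase and lengthens the safe-policy phase, and the hard part is verifying that the realized state at the \emph{preserved} switching instant still lies in $\mathcal{CS}^{j-1}$, so that the safe policy can legitimately take over and extend invariance one step further. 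Establishing this alignment, rather than the routine constraint bookkeeping, is the crux of the argument.
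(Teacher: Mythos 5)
Your proof is correct and takes essentially the same route as the paper's: induction on $t$ with the one-step-shifted candidate $[\pi^{j,*}_{t+1|t}(\cdot), \ldots, \pi^{j,*}_{t+N-1|t}(\cdot), \kappa^{j-1,*}(\cdot)]$, decrementing the switching parameter to $N_{t+1}^j = N_t^{j,*}-1$ (or keeping $N_{t+1}^j=0$) and invoking Proposition~\ref{prop:invariance} to append the safe policy at the terminal step. You merely make explicit the disturbance-feedback bookkeeping and the $N_t^{j,*}=0$ case, which the paper leaves implicit.
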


\begin{proof}
Assume that at time $t$ Problem~\eqref{eq:FTOCP} is feasible for some $N_t^j \in \{0, \ldots, N\}$. At the next time $t+1$, by Proposition~\ref{prop:invariance} we know that, for $\kappa^{j-1,*}(\cdot) \in \mathcal{SP}^{j-1}$, the following candidate policy 
\begin{equation}\label{eq:candidateSolution}
    [\pi^{j,*}_{t+1|t}(\cdot), \ldots, \pi^{j,*}_{t+N-1|t}(\cdot), \kappa^{j-1,*}(\cdot)]    
\end{equation}
is feasible for the Problem~\eqref{eq:FTOCP} for some $N_{t+1}^j \in \{0, \ldots, N\}$. \\
By assumption $x_0^j \in \mathcal{C}^j$, thus we have that Problem~\eqref{eq:FTOCP} is feasible at time $t=0$ for some $N_0^j \in \{0, \ldots, N\}$. Concluding, we have shown that if Problem~\eqref{eq:FTOCP} is feasible for some $N_t^j \in \{0, \ldots, N\}$ at time $t$, then Problem~\eqref{eq:FTOCP} is feasible for some $N_{t+1}^j \in \{0, \ldots, N\}$ at $t+1$. Therefore by induction we have that for all time $t\geq0$ Problem~\eqref{eq:FTOCP} is feasible for some $N_t^j \in \{0, \ldots, N\}$ and the closed-loop system~\eqref{eq:sys} and~\eqref{eq:MPCpolicy} satisfies state and input constraints~\eqref{eq:constraintSet}.
\end{proof}

\subsection{Input to State Stability (ISS)}
In this section, we show that the closed-loop system~\eqref{eq:sys} and~\eqref{eq:MPCpolicy} is ISS with respect to $\mathcal{O}$. 
We recall that in standard MPC strategies the finite time optimal control problem can be reformulated as a parametric Quadratic Program (QP). This fact is used in~\cite{goulart2006optimization} to show continuity of the open-loop cost and then to prove ISS of the origin. 
In the proposed approach, the open-loop cost from Algorithm~\ref{LMPCpolicy} 
\begin{equation*}
    J_{t\rightarrow t+N}^{\scalebox{0.4}{LMPC},j}(x_t) = \textstyle \text{min}_{N_t^j \in \{0, \ldots, N\} } C_{t\rightarrow t+N}^{\scalebox{0.4}{LMPC},j}(x_t^j, N_t^j)
\end{equation*}
is not given by the solution to a parametric QP. 
Therefore, continuity cannot be guaranteed, and the standard technique from~\cite{goulart2006optimization} cannot be used to prove ISS.
Instead, we introduce the standard definition of dissipative-form ISS-Lyapunov function for the robust invariant set $\mathcal{O}$  \cite{lin1995various,grune2014iss} and we show that the cost of the LMPC $J_{t\rightarrow t+N}^{\scalebox{0.4}{LMPC},j}(x_t)$ is an ISS-Lyapunov function.

\begin{definition}
A dissipative-form ISS-Lyapunov function for the closed-loop system \eqref{eq:sys} and~\eqref{eq:MPCpolicy} and the invariant set $\mathcal{O}$ is a function $V:\mathbb{R}^n \rightarrow \mathbb{R}_{\geq0}$ such that there exists $\alpha_1, \alpha_2, \alpha \in \mathcal{K}_\infty$, and $\sigma \in \mathcal{K}$ so that for all $x_t \in \mathbb{R}^n$ and $w_t \in \mathbb{R}^m$,
\begin{subequations}
\begin{align}
    &\alpha_1(|x_t|_\mathcal{O}) \leq V(x_t) \leq \alpha_2(|x_t|_\mathcal{O}) \label{eq:zeroProperty}\\
    &V(Ax_t+B\pi^{j,*}_{t|t}(x_t)+w)-V(x_t) \leq -\alpha(|x_t|_\mathcal{O}) + \sigma(||w_t||).
\end{align}
\end{subequations}
\end{definition}

Notice that, as in \cite{grune2014iss}, no assumptions on the continuity of $V(\cdot)$ are required. However \eqref{eq:zeroProperty} implies that $V(\cdot)$ is continuous on the boundary of $\mathcal{O}$. The above definition can be used to show that the closed-loop system \eqref{eq:sys} and~\eqref{eq:MPCpolicy} is ISS with respect to the invariant set $\mathcal{O}$, as described by the following proposition.

\begin{proposition}\label{prop:ISS}
The following statements are equivalent:
\begin{itemize}
    \item The closed-loop system \eqref{eq:sys} and~\eqref{eq:MPCpolicy} is ISS with respect to the robust invariant set $\mathcal{O}$.
    \item There exists a dissipative-form ISS-Lyapunov function $V(\cdot)$.
\end{itemize}
\end{proposition}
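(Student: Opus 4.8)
The plan is to prove the two implications separately, observing that this is the set-stability analogue of the classical direct and converse ISS-Lyapunov theorems \cite{lin1995various, jiang2001input, grune2014iss}; both directions follow the standard template, adapted from the equilibrium-point case to the robust positive invariant set $\mathcal{O}$. The direction actually exercised in the paper (to certify $J_{t\rightarrow t+N}^{\scalebox{0.4}{LMPC},j}(\cdot)$ as an ISS-Lyapunov function and hence conclude ISS of the closed loop) is \emph{sufficiency}, so I would develop that one in detail and invoke the cited converse theorem for necessity.

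For the sufficiency direction (existence of a dissipative ISS-Lyapunov function $V$ implies ISS), I would first eliminate the explicit state dependence on the right-hand side of the dissipation inequality. The upper bound $V(x_t) \leq \alpha_2(|x_t|_\mathcal{O})$ gives $|x_t|_\mathcal{O} \geq \alpha_2^{-1}(V(x_t))$, so setting $\tilde\alpha := \alpha \circ \alpha_2^{-1} \in \mathcal{K}_\infty$ turns the decrease condition into the purely scalar recursion
\[
V(x_{t+1}) \leq V(x_t) - \tilde\alpha(V(x_t)) + \sigma(\|w_t\|).
\]
I would then apply a discrete-time comparison lemma (as in \cite{jiang2001input}) to this recursion to obtain functions $\beta_V \in \mathcal{KL}$ and $\gamma_V \in \mathcal{K}$ with
\[
V(x_{t+1}) \leq \beta_V(V(x_0), t+1) + \gamma_V\big(\textstyle\sup_{k \in \{0,\ldots,t\}} \|w_k\|\big).
\]
Finally, substituting the lower bound $\alpha_1(|x_{t+1}|_\mathcal{O}) \leq V(x_{t+1})$ and the upper bound $V(x_0) \leq \alpha_2(|x_0|_\mathcal{O})$, together with the weak triangle inequality $\alpha_1^{-1}(a+b) \leq \alpha_1^{-1}(2a) + \alpha_1^{-1}(2b)$ valid for class-$\mathcal{K}_\infty$ functions, yields exactly the ISS estimate of Definition~\ref{def:iss} with $\beta(r,t) = \alpha_1^{-1}(2\beta_V(\alpha_2(r), t))$ and $\gamma(s) = \alpha_1^{-1}(2\gamma_V(s))$; one checks these are of the required classes $\mathcal{KL}$ and $\mathcal{K}$, respectively, since composition of monotone $\mathcal{K}_\infty$ maps preserves the $\mathcal{KL}$ structure.

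For the necessity direction (ISS implies existence of a dissipative ISS-Lyapunov function), I would invoke the discrete-time converse ISS-Lyapunov theorem for set stability from \cite{lin1995various, grune2014iss}. The construction reparametrizes the $\mathcal{KL}$ estimate furnished by ISS (via Sontag's lemma on $\mathcal{KL}$ bounds) so that a candidate built from a weighted supremum of $|x_t|_\mathcal{O}$ along trajectories converges, and then verifies the dissipation inequality by splitting each trajectory into its one-step transition and its tail. The crucial point, already flagged in the remark following the definition, is that in this discrete-time setting no continuity or smoothness of $V$ is required, which is precisely what allows the argument to apply to the possibly discontinuous LMPC cost. I expect this converse direction to be the main obstacle: classical converse theorems are typically stated for continuous (or smooth) Lyapunov functions at an \emph{equilibrium}, whereas here one must handle a merely upper/lower bounded function relative to the invariant set $\mathcal{O}$, so the care lies in invoking the correct set-stability, regularity-free version of the result rather than in the sufficiency computation.
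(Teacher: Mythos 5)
Your proposal is correct, and at bottom it rests on the same result the paper invokes: the paper's entire proof is a one-line appeal to \cite[Theorem~2.3]{grune2014iss}, with the substitution of $|x|$ by $|x|_\mathcal{O}$ justified by the observation that \eqref{eq:zeroProperty} forces $V(x)=0$ iff $|x|_\mathcal{O}=0$. What you do differently is unpack the sufficiency half of that citation explicitly --- the dissipation inequality is turned into the scalar recursion $V(x_{t+1})\leq V(x_t)-\tilde\alpha(V(x_t))+\sigma(\|w_t\|)$ with $\tilde\alpha=\alpha\circ\alpha_2^{-1}$, a discrete-time comparison lemma yields a $\mathcal{KL}$ bound on $V$, and the sandwich $\alpha_1,\alpha_2$ together with the weak triangle inequality recovers the estimate of Definition~\ref{def:iss} --- while deferring necessity to the converse theorem, exactly as the paper does for both directions. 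This buys transparency on two points the paper leaves implicit: it is the sufficiency direction that carries the load downstream (certifying $J^{\scalebox{0.5}{LMPC},j}_{t\rightarrow t+N}$ as an ISS-Lyapunov function), and it is the \emph{dissipative} form that makes the argument survive the discontinuity of the LMPC cost, since the implication-form equivalence is known to fail for discontinuous discrete-time systems. One technical caveat absorbed by your citation of the comparison lemma: before iterating the scalar recursion one must replace $\tilde\alpha$ by a smaller $\mathcal{K}_\infty$ function $\hat\alpha\leq\tilde\alpha$ for which $\mathrm{id}-\hat\alpha$ is nondecreasing, as in \cite{jiang2001input}; without this the naive iteration can fail to be monotone. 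With that understood, your argument and the paper's are two presentations of the same proof.
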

\begin{proof}
The proof follows from \cite[Theorem~2.3]{grune2014iss} substituting $|x|$ with $|x|_\mathcal{O}$. Note that we can replace $|x|$ with $|x|_\mathcal{O}$ as by \eqref{eq:zeroProperty} we have that $V(x)=0$ iff $|x|_\mathcal{O}=0$.
\end{proof}
\begin{proposition}\label{prop:costBound}
Let Assumptions~\ref{ass:first}--\ref{ass:cost} hold, $x_t^j \in \mathcal{C}^j$ and define the closed-loop system dynamics 
\begin{equation*}
    f^j_t(x_t^j,w_t^j)=Ax_t+B \pi^{j,*}_{t|t}(x_t^j)+w_t^j,
\end{equation*}
where $\pi^{j,*}_{t|t}(\cdot)$ is the optimal policy~\eqref{eq:optNstepPolicy}. Then there exists a constant $L > 0$ such that
\begin{equation*}
\begin{aligned}
    J^{\scalebox{0.5}{LMPC},j}_{t+1 \rightarrow t+1+N}(f_t^j(x_t^j,w_t^j) ) - & J^{\scalebox{0.5}{LMPC},j}_{t \rightarrow t + N}(x_t^j )\\
    &  \leq -\alpha(|x_t^j|_\mathcal{O}) + L ||w_t^j||,
\end{aligned}
\end{equation*}
$\forall t \geq 0$, $\forall w_t^j \in \mathcal{W}$ and $\alpha \in \mathcal{K}_\infty$.
\end{proposition}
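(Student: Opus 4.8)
The plan is to bound the one-step increment of the LMPC cost by comparing it with the cost of the shift-and-append candidate introduced in the proof of Theorem~\ref{th:recFeas}, thereby sidestepping the fact---stressed in the discussion preceding this proposition---that $J^{\scalebox{0.5}{LMPC},j}_{t\rightarrow t+N}(\cdot)$ is \emph{not} the value of a single parametric QP and need not be continuous. Fix $x_t^j\in\mathcal{C}^j$ and let $N_t^{j,*}$, the optimal policy $\boldsymbol{\pi}_t^{j,*}(\cdot)$ from~\eqref{eq:optNstepPolicy}, and the optimal nominal trajectory $\bar x_{t|t}^{j,*}=x_t^j,\ldots,\bar x_{t+N|t}^{j,*}$ attain $J^{\scalebox{0.5}{LMPC},j}_{t\rightarrow t+N}(x_t^j)$. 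At the successor state $x_{t+1}^j=f_t^j(x_t^j,w_t^j)$ I would evaluate the candidate~\eqref{eq:candidateSolution}, i.e.\ the shifted policies $\pi_{t+1|t}^{j,*},\ldots,\pi_{t+N-1|t}^{j,*}$ followed by the safe policy $\kappa^{j-1,*}(\cdot)$ of~\eqref{eq:theSafePolicy}, with split index $N_{t+1}^j=\max\{N_t^{j,*}-1,0\}$. By Theorem~\ref{th:recFeas} this candidate is feasible, so its nominal cost $J_{\mathrm{cand}}$ upper-bounds $J^{\scalebox{0.5}{LMPC},j}_{t+1\rightarrow t+1+N}(x_{t+1}^j)$; since only this \emph{upper} bound is used, the possible discontinuity of the minimised cost at $t+1$ is irrelevant.

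Second, I would telescope $J_{\mathrm{cand}}$ on its nominal trajectory. The candidate's $(t+N)$-th nominal state lies in $\mathcal{CS}^{j-1}$ by feasibility, so Proposition~\ref{prop:QfunDecrease} applied there shows that the appended running term $h(\cdot,\kappa^{j-1,*}(\cdot))$ plus the terminal $Q^{j-1}(A\cdot+B\kappa^{j-1,*}(\cdot))$ collapse into a single $Q^{j-1}$ evaluated at that same state. The surviving running terms are exactly the shifted copies of the optimal ones, while the optimal first-stage cost $h(x_t^j,\pi_{t|t}^{j,*}(x_t^j))$ is not reproduced. Hence, if the candidate trajectory were the exact shift of $\bar x^{j,*}$, one would get $J_{\mathrm{cand}}\le J^{\scalebox{0.5}{LMPC},j}_{t\rightarrow t+N}(x_t^j)-h(x_t^j,\pi_{t|t}^{j,*}(x_t^j))$, which is the deterministic decrease.

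The main obstacle is that the candidate's nominal trajectory is initialised at $x_{t+1}^j=\bar x_{t+1|t}^{j,*}+w_t^j$ rather than at $\bar x_{t+1|t}^{j,*}$, so it is a $w_t^j$-perturbation of the shifted optimal one. I would control this stagewise. On the disturbance-feedback segment the certainty-equivalent feedback reduces to the open-loop offsets $g^j_{\cdot|t}$, so the deviation $\delta_k$ between the two nominal trajectories obeys $\delta_{k+1}=A\delta_k$ with $\delta_{t+1}=w_t^j$; on the safe-policy segment $\kappa^{j-1,*}(\cdot)={\bf{U}}^{j-1}\boldsymbol{\lambda}^{j-1,*}(\cdot)$ is Lipschitz, because $\boldsymbol{\lambda}^{j-1,*}(\cdot)$ is the minimiser of the parametric LP~\eqref{eq:Qfun} over the fixed polytope $\mathcal{CS}^{j-1}$, so $\delta_k$ still grows at most geometrically and $\|\delta_k\|\le C\|w_t^j\|$ uniformly over the compact sets. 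Since $h$ is continuous and convex on the compact $\mathcal{X}\times\mathcal{U}$ and $Q^{j-1}$ is convex piecewise-affine on the compact $\mathcal{CS}^{j-1}$, both are Lipschitz; propagating $\|\delta_k\|\le C\|w_t^j\|$ through the finitely many stage and terminal terms yields $J_{\mathrm{cand}}\le J^{\scalebox{0.5}{LMPC},j}_{t\rightarrow t+N}(x_t^j)-h(x_t^j,\pi_{t|t}^{j,*}(x_t^j))+L\|w_t^j\|$ with a constant $L>0$ independent of $t$ and $w_t^j$. Establishing this Lipschitz/perturbation estimate, and checking that $C$, $L$ can be taken uniform over $\mathcal{C}^j$, is where the real work lies.

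Finally, chaining $J^{\scalebox{0.5}{LMPC},j}_{t+1\rightarrow t+1+N}(x_{t+1}^j)\le J_{\mathrm{cand}}$ with the previous bound gives $J^{\scalebox{0.5}{LMPC},j}_{t+1\rightarrow t+1+N}(f_t^j(x_t^j,w_t^j))-J^{\scalebox{0.5}{LMPC},j}_{t\rightarrow t+N}(x_t^j)\le -h(x_t^j,\pi_{t|t}^{j,*}(x_t^j))+L\|w_t^j\|$. Lower-bounding the realised first-stage cost through the stage-cost bound of Assumption~\ref{ass:cost}, namely $h(x_t^j,\pi_{t|t}^{j,*}(x_t^j))\ge\alpha(|x_t^j|_\mathcal{O})$ for some $\alpha\in\mathcal{K}_\infty$, then delivers the claimed dissipation inequality for all $t\ge0$ and all $w_t^j\in\mathcal{W}$.
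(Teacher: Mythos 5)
Your proposal is correct and follows essentially the same route as the paper: bound $J^{\scalebox{0.5}{LMPC},j}_{t+1\rightarrow t+1+N}$ by the shifted candidate of Theorem~\ref{th:recFeas}, collapse the appended term via Proposition~\ref{prop:QfunDecrease}, treat the cost-from-$t{+}1$ as a Lipschitz function of its initial nominal state so the $w_t^j$-shift contributes $L\|w_t^j\|$, and finish with Assumption~\ref{ass:cost}. The only difference is presentational: the paper compresses the perturbation estimate into the one-line assertion that the tail-cost function $p(\cdot)$ is Lipschitz, whereas you unpack it stagewise via the deviation recursion $\delta_{k+1}=A\delta_k$ on the disturbance-feedback segment and the Lipschitz selection of the safe policy on the remainder.
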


\begin{proof}
The proof can be found in the Appendix.
\end{proof}

The above propositions allow us to prove that the closed-loop system~\eqref{eq:sys} and~\eqref{eq:MPCpolicy} is ISS with respect to $\mathcal{O}$.\\

\begin{theorem}\label{th:conv}
Consider the closed-loop system~\eqref{eq:sys} and~\eqref{eq:MPCpolicy}. Let Assumptions~\ref{ass:first}--\ref{ass:cost} hold and assume that $x_0^j \in \mathcal{C}^j$, then the closed-loop system~\eqref{eq:sys} and~\eqref{eq:MPCpolicy} is Input-to-State Stable (ISS) for the robust positive invariant set $\mathcal{O}$.
\end{theorem}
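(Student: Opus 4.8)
The plan is to exhibit the LMPC optimal cost as a dissipative-form ISS-Lyapunov function for $\mathcal{O}$ and then invoke Proposition~\ref{prop:ISS}. Concretely, I would take $V(x) := J^{\scalebox{0.5}{LMPC},j}_{t\rightarrow t+N}(x)$, the cost returned by Algorithm~\ref{LMPCpolicy}, as the candidate function. By Theorem~\ref{th:recFeas}, once $x_0^j \in \mathcal{C}^j$ the closed loop remains in $\mathcal{C}^j$ for all $t$ and all disturbance realizations, so $V$ is finite along every closed-loop trajectory; this is the domain on which the Lyapunov argument lives.

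The dissipation inequality is essentially free: it is exactly the statement of Proposition~\ref{prop:costBound}, which gives $V(f_t^j(x_t^j,w_t^j)) - V(x_t^j) \leq -\alpha(|x_t^j|_\mathcal{O}) + L\|w_t^j\|$ with $\alpha \in \mathcal{K}_\infty$. It only remains to observe that $w \mapsto L\|w\|$ is a class-$\mathcal{K}$ function, so that the second defining inequality of a dissipative-form ISS-Lyapunov function holds verbatim with $\sigma(\cdot) = L\|\cdot\|$.

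The substantive work then reduces to verifying the sandwich bound~\eqref{eq:zeroProperty}. For the lower bound I would discard all terms of the open-loop cost except the first stage cost, which is legitimate because every stage cost and $Q^{j-1}$ are nonnegative (by Assumption~\ref{ass:cost} and the construction of the $Q$-function), yielding $V(x) \geq h(x,\pi^{j,*}_{t|t}(x))$; positive-definiteness of $h$ with respect to $\mathcal{O}$, obtained from Assumption~\ref{ass:cost} together with joint convexity and $h(0,0)=0$, then gives $V(x)\geq \alpha_1(|x|_\mathcal{O})$ for some $\alpha_1\in\mathcal{K}_\infty$. For the upper bound I would use a suboptimal feasible choice: taking $N_t^j=0$ makes the predicted trajectory follow the safe policy $\kappa^{j-1,*}(\cdot)$ over the whole horizon, which is admissible whenever $x\in\mathcal{CS}^{j-1}$; summing the per-stage decrease of Proposition~\ref{prop:QfunDecrease} telescopes the open-loop cost down to $Q^{j-1}(x)$, and the stage-cost upper bound of Assumption~\ref{ass:cost} gives $Q^{j-1}(x)\leq \alpha_2(|x|_\mathcal{O})$. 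Since $\mathcal{O}=\mathcal{CS}^0\subseteq\mathcal{CS}^{j-1}$, this holds in particular on a neighborhood of $\mathcal{O}$, which is what~\eqref{eq:zeroProperty} needs to force $V$ to vanish continuously on $\partial\mathcal{O}$. With both bounds and the dissipation inequality in hand, $V$ is a dissipative-form ISS-Lyapunov function for $\mathcal{O}$, and Proposition~\ref{prop:ISS} immediately delivers ISS of the closed-loop system~\eqref{eq:sys} and~\eqref{eq:MPCpolicy} with respect to $\mathcal{O}$.

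The main obstacle I anticipate is the sandwich bound rather than the dissipation step, which is already packaged in Proposition~\ref{prop:costBound}. Because $V=\min_{N_t^j}C^{\scalebox{0.5}{LMPC},j}_{t\rightarrow t+N}$ is a minimum over the discrete parameter $N_t^j$, it is in general discontinuous, so one cannot lean on a continuity/compactness argument to manufacture $\alpha_2$; the upper bound must instead be extracted constructively from the feasible suboptimal policy as above, and some care is needed to argue that a $\mathcal{K}_\infty$ upper bound extends from $\mathcal{CS}^{j-1}$ to all of $\mathcal{C}^j$. The lower bound's only subtlety is passing from the state-only estimate $h(x,0)$ in Assumption~\ref{ass:cost} to the joint cost $h(x,u)$ evaluated at the optimal input, which is precisely where joint convexity and $h(0,0)=0$ are used.
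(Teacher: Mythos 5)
Your overall architecture is the paper's: take $V=J^{\scalebox{0.5}{LMPC},j}_{t\rightarrow t+N}$, get the dissipation inequality from Proposition~\ref{prop:costBound}, verify the sandwich bound~\eqref{eq:zeroProperty}, and invoke Proposition~\ref{prop:ISS}. But there are two genuine gaps. First, you never establish that $\mathcal{O}$ is a robust positive invariant set \emph{for the closed-loop system}~\eqref{eq:sys} and~\eqref{eq:MPCpolicy}; Definition~\ref{def:iss} and Proposition~\ref{prop:ISS} are stated relative to a set that is invariant under the closed-loop dynamics, and Assumption~\ref{ass:O_inf} only gives invariance under $u=Kx$, not under the LMPC policy. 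The paper closes this by showing that for $x_t^j\in\mathcal{O}$ the policy sequence $[\pi_{t|t}(x)=Kx,\ldots]$ is feasible for $N_t^j=N$ with zero cost, hence optimal, so the LMPC reproduces $Kx$ inside $\mathcal{O}$; this step also anchors $V=0$ on $\mathcal{O}$, which your sandwich bound silently relies on.

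Second, your construction of $\alpha_2$ does not go through. The suboptimal choice $N_t^j=0$ bounds $V$ by $Q^{j-1}$ only on $\mathcal{CS}^{j-1}$, and $\mathcal{CS}^{j-1}$ need not contain a neighborhood of $\mathcal{O}$: at iteration $j=1$ one has $\mathcal{CS}^{0}=\mathcal{O}$ exactly, so your bound says nothing about $V$ at points of $\mathcal{C}^j$ just outside $\partial\mathcal{O}$, which is precisely where~\eqref{eq:zeroProperty} has content (away from $\mathcal{O}$, compactness lets you inflate $\alpha_2$ freely). You explicitly reject the continuity route on the grounds that a minimum over the discrete parameter $N_t^j$ is discontinuous, but this discards the paper's key observation: for each fixed $N_t^j$ the map $x\mapsto C^{\scalebox{0.5}{LMPC},j}_{t\rightarrow t+N}(x,N_t^j)$ is continuous at $\partial\mathcal{O}$, and a pointwise minimum of finitely many functions each continuous at a point is continuous at that point. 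So $V$ \emph{is} continuous at $\partial\mathcal{O}$ (even if discontinuous elsewhere), and this, together with $V|_{\mathcal{O}}=0$ and boundedness from compactness of $\mathcal{X}$ and $\mathcal{U}$, is what manufactures $\alpha_2$ on all of $\mathcal{C}^j$. A minor further caveat, shared with the paper's own writeup: passing from $V(x)\geq h(x,\pi^{j,*}_{t|t}(x))$ to $V(x)\geq h(x,0)\geq\alpha_x^l(|x|_{\mathcal{O}})$ needs more than joint convexity and $h(0,0)=0$, since Assumption~\ref{ass:cost} only bounds $h(x,0)$ and $h(0,u)$ separately and $h(x,u)\geq h(x,0)$ is false for general convex $h$.
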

\begin{proof} First we show that the set $\mathcal{O}$ is robust positive invariant for the closed-loop system~\eqref{eq:sys} and~\eqref{eq:MPCpolicy}. Assume that at time $t$ of iteration $j$ the state $x_t^j \in \mathcal{O}$ and recall that the disturbance feedback policy \eqref{eq:FTOCP_DFpolicy} is equivalent to state a feedback policy \cite{goulart2006optimization}, then we have that the candidate policy
\begin{equation*}
    [\pi^{j,*}_{t|t}(x)=Kx,\ldots,\pi^{j,*}_{t+N-1|t}(x)=Kx]
\end{equation*}
is feasible at time $t$ of the $j$th iteration for $N_t^j = N$. Now we notice that the cost associated with the above feasible policy is zero. Therefore, we have that $u_t^j=\pi^{j,*}_{t|t}(x_{t}^{j})=Kx_t^j$,  which together with Assumption~\ref{ass:O_inf} implies that the closed-loop system $x_{t+1}^j = Ax_t^j + B\pi^{j,*}_{t|t}(x_{t}^{j}) +w_t^j \in \mathcal{O}, \forall w_t^j \in \mathcal{W}$ and that $\mathcal{O}$ is robust positive invariant for the closed-loop system~\eqref{eq:sys} and~\eqref{eq:MPCpolicy}. Next, notice that for a fixed $N_t^j$ the function $C_{t\rightarrow t+N}^{\scalebox{0.4}{LMPC},j}(\cdot, N_t^j):\mathbb{R}^n\rightarrow\mathbb{R}$ is continuous at the boundary of the set $\mathcal{O}$. 
Therefore, the function $J^{\scalebox{0.5}{LMPC},j}_{t \rightarrow t+  N}( \cdot )$, which is defined as the point-wise minimum of the functions $C_{t\rightarrow t+N}^{\scalebox{0.4}{LMPC},j}(\cdot, N_t^j)$ for $N_t^j \in \{0,\ldots,N\}$, is continuous at the boundary of $\mathcal{O}$.
Continuity at the boundary of $\mathcal{O}$, Assumption~\ref{ass:cost} and \eqref{eq:FTOPC_Cost} imply the existence of $\alpha_1, \alpha_2 \in \mathcal{K}_\infty$ such that  $\forall x_t \in \mathcal{C}^j$
\begin{equation}
    \alpha_1(|x_t|_\mathcal{O}) \leq h(x_t, 0) \leq J^{\scalebox{0.5}{LMPC},j}_{t \rightarrow  t+N}( x_t ) \leq \alpha_2(|x_t|_\mathcal{O}).
\end{equation}
In the above equation, we used compactness of the sets $\mathcal{X}$ and $\mathcal{U}$.
Finally, from Proposition~\ref{prop:costBound}, we have that $\forall x_t \in \mathcal{C}^j$
\begin{equation*}
\begin{aligned}
     J^{\scalebox{0.5}{LMPC},j}_{t+1 \rightarrow t+1+N}( Ax_t+B\pi^{j,*}_{t|t}(x_t)&+w_t )-J^{\scalebox{0.5}{LMPC},j}_{t \rightarrow t+N}( x_t ) \\
     &~\leq - \alpha(|x_t|_\mathcal{O})+\sigma(||w_t||_2)
\end{aligned}
\end{equation*}
and therefore $J^{\scalebox{0.5}{LMPC},j}_{t \rightarrow t+N}( \cdot)$ is a ISS-Lyapunov function and the closed-loop system~\eqref{eq:sys} and~\eqref{eq:MPCpolicy} is Input-to-State Stable for the robust positive invariant set $\mathcal{O}$. 
\end{proof}

\vspace{0.1cm}
\vspace{0.2cm} \begin{remark}
We underline that the ISS property does not imply that the closed-loop system converges in finite time to the set $\mathcal{O}$. However, finite time convergence may be obtained following the approach presented in~\cite{langson2004robust}, if at time $T^j$ the system has not reach $\mathcal{O}$. In particular, the optimal policy ${\boldsymbol{\pi}_{T^j}^{j,*}(\cdot)=[\pi^{j,*}_{{T^j}|t}(\cdot), \ldots, \pi^{j,*}_{{T^j}+N-1|t}(\cdot)]}$ computed by the finite time optimal control problem~\eqref{eq:FTOCP} can be used to robustly steer the system to the terminal set~$\mathcal{CS}^{j-1}$. Then, from a state $x_{{T^j}+N-1}^j\in\mathcal{CS}^j$ it would be possible to solve the finite time optimal control problem~\eqref{eq:FTOCP} for $j-1$ and leverage the time-varying optimal policy ${\boldsymbol{\pi}_{T^j+N}^{j-1,*}(\cdot)}$ to steer the system in finite time to $\mathcal{CS}^{j-2}$. This procedure may be iterated to steer the system in finite time to $\mathcal{CS}^{0} = \mathcal{O}$. Finally, we point out that robust exponential stability of the closed-loop system may be obtained decoupling the nominal and error dynamics as shown in~\cite{mayne2005robust}. However, this strategy cannot be applied when using a disturbance feedback policy as in~\eqref{eq:FTOCP}.

\end{remark}

\subsection{Performance Bound}
Finally, we show that whenever $x_0^j\in \mathcal{CS}^{j-1}$ the robust $Q$-function at iteration $j-1$ is an upper-bound to the closed-loop cost of the certainty equivalent system at the next $j$th iteration.
\begin{theorem}\label{th:cost}
Consider the certainty equivalent system \eqref{eq:disturbanceFreeSys} in closed-loop with the LMPC \eqref{eq:FTOCP} and~\eqref{eq:MPCpolicy}. Let Assumptions~\ref{ass:first}--\ref{ass:cost} hold and $x_0^j = \bar x_0^j \in \mathcal{CS}^{j-1} \subseteq \mathcal{C}^j$. Then, we have that the iteration cost of the certainty equivalent closed-loop system \eqref{eq:disturbanceFreeSys} and~\eqref{eq:MPCpolicy} is an upper-bound to the $Q$-function constructed at the previous iteration, i.e.,
\begin{equation}
        J^{j}_{0\rightarrow  T^j}(\bar x_0^j) = \sum_{t=0}^{T^j} h\big(\bar x_t^j, \pi_{t|t}^{j,*}(\bar x_t^j)\big) \leq Q^{j-1}(\bar x_0^j)
\end{equation}
where $\pi_{t|t}^{j,*}(\cdot)$ is the optimal control policy in~\eqref{eq:optNstepPolicy}.
\end{theorem}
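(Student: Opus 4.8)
The plan is to follow the standard three-part LMPC performance argument: establish a base bound on the LMPC cost at $t=0$, prove a per-step decrease of the LMPC cost along the certainty equivalent closed loop, and then telescope. The essential simplification here is that the predicted cost in \eqref{eq:FTOPC_Cost} is evaluated on the \emph{nominal} trajectory $\bar x^j_{k|t}$, so that for the disturbance-free system \eqref{eq:disturbanceFreeSys} the closed-loop state coincides exactly with the first predicted nominal state, i.e.\ $\bar x^j_{t+1} = \bar x^j_{t+1|t}$.

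For the base bound, I would use that $\bar x_0^j \in \mathcal{CS}^{j-1}$ so that choosing $N_0^j = 0$ makes the safe policy $\kappa^{j-1,*}(\cdot)$ of \eqref{eq:theSafePolicy} a feasible candidate for \eqref{eq:FTOCP} on the whole horizon; by Proposition~\ref{prop:invariance} the induced nominal trajectory stays in $\mathcal{CS}^{j-1}$, so the terminal constraint holds and $Q^{j-1}$ is well defined along it. Applying the decrease inequality of Proposition~\ref{prop:QfunDecrease} (at iteration $j-1$) recursively along this nominal trajectory telescopes to
\begin{equation*}
Q^{j-1}(\bar x_0^j) \geq \sum_{k=0}^{N-1} h\big(\bar x^j_{k|0}, \kappa^{j-1,*}(\bar x^j_{k|0})\big) + Q^{j-1}(\bar x^j_{N|0}),
\end{equation*}
whose right-hand side is exactly the cost of this feasible candidate. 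Optimality then gives $J^{\scalebox{0.5}{LMPC},j}_{0\rightarrow N}(\bar x_0^j) \leq Q^{j-1}(\bar x_0^j)$.

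For the per-step decrease, at time $t+1$ I would use the shifted candidate \eqref{eq:candidateSolution}, namely the tail of the optimal policy with $\kappa^{j-1,*}(\cdot)$ appended. Since there is no disturbance, this candidate reproduces the nominal tail $\bar x^j_{k|t}$ for $k = t+1,\ldots,t+N$, and applying Proposition~\ref{prop:QfunDecrease} at the terminal state $\bar x^j_{t+N|t} \in \mathcal{CS}^{j-1}$ absorbs the appended safe step into the terminal cost. Its cost therefore equals $J^{\scalebox{0.5}{LMPC},j}_{t\rightarrow t+N}(\bar x_t^j) - h(\bar x_t^j, \pi^{j,*}_{t|t}(\bar x_t^j))$, and optimality at $t+1$ yields
\begin{equation*}
h\big(\bar x_t^j, \pi^{j,*}_{t|t}(\bar x_t^j)\big) \leq J^{\scalebox{0.5}{LMPC},j}_{t\rightarrow t+N}(\bar x_t^j) - J^{\scalebox{0.5}{LMPC},j}_{t+1\rightarrow t+1+N}(\bar x_{t+1}^j),
\end{equation*}
where feasibility of the shifted candidate for all $t$ is guaranteed by Theorem~\ref{th:recFeas}.

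Finally, I would sum this inequality for $t = 0,\ldots,T^j$; the cost terms telescope, and dropping the nonnegative tail term $J^{\scalebox{0.5}{LMPC},j}_{T^j+1\rightarrow T^j+1+N}(\cdot) \geq 0$ together with the base bound gives $\sum_{t=0}^{T^j} h(\bar x_t^j, \pi^{j,*}_{t|t}(\bar x_t^j)) \leq Q^{j-1}(\bar x_0^j)$. The main obstacle is the per-step decrease: one must verify that in the certainty equivalent setting the shifted policy is genuinely feasible and reproduces the nominal tail exactly, and that the lone appended safe-policy step is correctly cancelled by Proposition~\ref{prop:QfunDecrease}; the nonnegativity needed to discard the tail term relies on $h \geq 0$ and $Q^{j-1} \geq 0$, which hold by Assumption~\ref{ass:cost} and the construction of the $Q$-function.
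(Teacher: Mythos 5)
Your proposal is correct and follows essentially the same route as the paper's proof: the base bound $J^{\scalebox{0.5}{LMPC},j}_{0\rightarrow N}(\bar x_0^j)\leq Q^{j-1}(\bar x_0^j)$ via telescoping Proposition~\ref{prop:QfunDecrease} along the safe policy, the per-step decrease via the shifted candidate \eqref{eq:candidateSolution} with $\kappa^{j-1,*}(\cdot)$ appended and absorbed by Proposition~\ref{prop:QfunDecrease}, and the final telescoping sum discarding the nonnegative tail cost. No substantive differences.
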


\begin{proof}
By Proposition~\ref{prop:QfunDecrease} we have that
\begin{equation}\label{eq:boundQfun}
\begin{aligned}
     Q^{j-1}(\bar x_0^j) & \geq h(\bar x_0^j, \kappa^{j-1,*}(\bar x_0^j) ) + Q^{j-1}(\bar x_1^j) \\
     & \geq h(\bar x_0^j, \kappa^{j-1,*}(\bar x_0^j) ) + h(\bar x_1^j, \kappa^{j-1,*}(\bar x_1^j) ) \\
     & \quad \quad \quad \quad \quad \quad \quad \quad \quad \quad \quad \quad \quad + Q^{j-1}(\bar x_2^j) \\
     & \geq \sum_{k=0}^{N-1} h(\bar x_k^j, \kappa^{j-1,*}(\bar x_k^j) ) + Q^{j-1}(\bar x_{N}^j) \\
     & \geq J^{\scalebox{0.5}{LMPC},j}_{0\rightarrow N}(\bar x_0^j),
\end{aligned}
\end{equation}
where the last inequality holds by the feasibility of the safe policy from Section~\ref{sec:dataBasedPolicy} for $x_0^j \in \mathcal{CS}^{j-1} \subseteq \mathcal{C}^j$.\\
Now consider the LMPC cost at time $t$, by Proposition~\ref{prop:QfunDecrease} we have that
\begin{equation*}
\begin{aligned}
    &J^{\scalebox{0.5}{LMPC},j}_{t \rightarrow t + N}(\bar x_t^j) =  \sum_{k = t}^{t+N-1} h(\bar x_{k|t}^{j,*}, \pi_{k|t}^{j,*}(\bar x_{k|t}^{j,*})) + Q^{j-1}( \bar x_{t+N|t}^{j,*} ) \\
     &~~\geq  \sum_{k = t}^{t+N-1} h(\bar x_{k|t}^{j,*}, \pi_{k|t}^{j,*}(\bar x_{k|t}^{j,*})) + h\Big(\bar x_{t+N|t}^{j,*} , \kappa^{j-1,*}\big( \bar x_{t+N|t}^{j,*} \big)\Big) \\
     &~~~~~~~~~~~~~~~+ Q^{j-1}\Big(A\bar x_{t+N|t}^{j,*} + B \kappa^{j-1,*}\big( \bar x_{t+N|t}^{j,*} \big)\Big) \\
   & ~~~=  h(\bar x_{t|t}^{j,*}, \pi_{t|t}^{j,*}(\bar x_{t|t}^{j,*})) + \sum_{k = t+1}^{t+N-1} h(\bar x_{k|t}^{j,*}, \pi_{k|t}^{j,*}(\bar x_{k|t}^{j,*})) \\
     & ~~~ + h\Big(\bar x_{t+N|t}^{j,*} , \kappa^{j-1,*}\big( \bar x_{t+N|t}^{j,*} \big)\Big)    \\
     &~~~~~~~~~~~~~~~~~~+ Q^{j-1}\Big(A\bar x_{t+N|t}^{j,*} + B \kappa^{j-1,*}\big( \bar x_{t+N|t}^{j,*} \big)\Big) \\
      & ~~~ \geq h(\bar x_{t}^{j}, \bar u_{t}^{j}) + J^{\scalebox{0.5}{LMPC},j}_{t+1 \rightarrow t+1 + N}( A \bar x_t^{j} + B \bar u_{t}^{j} ),
\end{aligned}
\end{equation*}
where $\bar u_{t}^{j} =  \pi_{t|t}^{j,*}(\bar x_{t}^{j})$.
The above equation implies that the LMPC cost is non-increasing and for the closed-loop trajectory of the certainty equivalent we have
\begin{equation*}\label{eq:costDecrease}
  J^{\scalebox{0.5}{LMPC},j}_{t+1 \rightarrow t+1 + N}( \bar x_{t+1}^{j} ) - J^{\scalebox{0.5}{LMPC},j}_{t \rightarrow t + N}(\bar x_t^j) \leq -h(\bar x_{t}^{j}, \bar u_{t}^{j}),
\end{equation*}
which in turns implies that
\begin{equation*}
\begin{aligned}
    J^{\scalebox{0.5}{LMPC},j}_{0 \rightarrow N}(\bar x_0^j) & \geq \sum_{k=0}^{T^j} h(\bar x_{k}^{j}, \bar u_{k}^{j}) + J^{\scalebox{0.5}{LMPC},j}_{T^j+1 \rightarrow T^j+1 + N}( \bar x_{T^j+1}^{j} ) \\
    &\geq \sum_{k=0}^{T^j} h(\bar x_{k}^{j}, \bar u_{k}^{j})=J^{j}_{0\rightarrow  T^j}(\bar x_0^j).
\end{aligned}
\end{equation*}
Finally, from the above equation and~\eqref{eq:boundQfun} we conclude that 
\begin{equation*}
    Q^{j-1}(\bar x_0^j) \geq J^{\scalebox{0.5}{LMPC},j}_{0 \rightarrow N}(\bar x_0^j) \geq J^{j}_{0\rightarrow  T^j}(\bar x_0^j).
\end{equation*}
\end{proof}

\section{Region of Attraction Approximation}\label{sec:regAtt}
In this section, we present an algorithm which may be used to select the initial condition $x_0^j$ or to approximate the region of attraction of the LMPC policy $\mathcal{C}^j$ in \eqref{eq:feasRegion}. We underline that the region of attraction $\mathcal{C}^j$ may be computed solving problem~\eqref{eq:FTOCP} as a set of parametric optimization problems with parameter $x_t^j$~\cite{BorrelliBemporadMorari_book}. However, this computation may be prohibitive and therefore we propose a strategy to compute an inner approximation to the region of attraction $\mathcal{C}^j$.

Given a vector $d\in \mathbb{R}^n$, we define the following optimization problem:
\begin{equation}\label{eq:regAttApp}
\begin{aligned}
P(N_t^j, d) =
\min_{\substack{x_0, \boldsymbol{M}_t^j,\\ {\bm{\lambda}}^j_{t}, \boldsymbol{g}_t^j } } & ~~ d^\top  x_0  \\
\textrm{s.t. }& ~~ (d^\perp )^\top x_0 =0 \\
& ~~ x^j_{t|t}= \bar x^j_{t|t}=x_0\\
& ~~ \eqref{eq:FTOCP_dynNom}-\eqref{eq:FTOCP_policy2}\\
& ~~   \forall w^j_{k|t} \in \mathcal{W}, \forall k = \{t, ..., t\!+\!N\!\!-\!\!1\},
\end{aligned}
\end{equation}
where $d^\perp \in \mathbb{R}^n$ is a vector perpendicular to $d \in \mathbb{R}^n$. Basically, the above optimization problem finds the optimal initial state $x_0^*$, which is farthest from the origin along the direction $d$ and guarantees that problem $C_{t\rightarrow t+N}^{\scalebox{0.5}{LMPC},j}(x_0, N_t^j)$ from~\eqref{eq:FTOCP} is feasible. Therefore, given a user-defined set of vectors $\mathcal{D}=\{d^1, \ldots, d^k\}$, problem~\eqref{eq:regAttApp} can be solved repeatedly to approximate the region of attraction~${\mathcal{C}}^j$. In particular, for each vector $d^i \in \mathcal{D}$ and parameter $N_t^j \in \{1,\ldots,N\}$ we solve problem $P(N_t^j, d^i)$. Afterwards, we define the approximated region of attraction $\mathcal{\tilde{C}}^j$ as the convex hull of the optimal initial states $x_0^*$ associated with different $N_t^j$ and vector $d^i$. Algorithm~\ref{RegAttrAppr} summaries this procedure.

\begin{algorithm}[h!] 
	\SetAlgoLined
	\textbf{Input:} Set of vectors $\mathcal{D}=\{d^1, \ldots, d^k\}$ and horizon $N$\\
	Initialize $\mathcal{\tilde{C}}^j = \varnothing$ \\
	\For{$d^i \in \mathcal{D}$}{
	\For{$N_t^j \in \{1,\ldots,N\}$}{
	Solve $P(N_t^j, d^i)$ from~\eqref{eq:regAttApp}\\
	Let $x_0^*$ be the optimal initial state from $P(N_t^j, d^i)$\\
	Set $\mathcal{\tilde{C}}^j = \text{Conv}\{\mathcal{\tilde{C}}^j\cup \{x_0^*\}\}$
	}}
	\textbf{Output:} Approximate region of attraction $\mathcal{\tilde{C}}^j$
	\caption{Region of Attraction Approximation}
	\label{RegAttrAppr}
\end{algorithm}

Finally, we underline that the optimization problem~\eqref{eq:regAttApp} may be used to select the initial condition at iteration $j$. In particular given a direction $d^j$, the initial condition $x_0^j$ may be chosen as the optimal initial state $x_0^*$ for problem $P(N, d^j)$, as shown in the result section.

\section{Simulation Results}\label{sec:results}
We test the proposed controller on a system subject to bounded additive uncertainty. First, we show that the proposed strategy is able to improve the performance of a system executing an iterative task. Afterwards, we show that the proposed LMPC can be used to iteratively construct a robust convex safe set $\mathcal{CS}^j$, which is defined over progressively larger regions of the state space. Finally, we show that data collected by the LMPC can be exploited to construct the safe policy~\eqref{eq:theSafePolicy}, which robustly steers the uncertain system from any state within the robust safe set $\mathcal{CS}^j$ to the set $\mathcal{O}$.

We consider the following double integrator system
\begin{equation*}
    x_{t+1} = \begin{bmatrix} 1 & 1 \\ 0 & 1 \end{bmatrix} x_t + \begin{bmatrix} 0 \\ 1 \end{bmatrix} u_t + w_t,
\end{equation*}
where $w_t \in \{w \in \mathbb{R}^2: ||w||_\infty \leq 0.1 \}$, subject to the following constraints $x_t \in \mathcal{X} = \{x\in \mathbb{R}^2 : ||x||_\infty \leq 10 \}$
and $u_t \in \mathcal{U} = \{ u \in \mathbb{R}: ||u||_\infty \leq 1 \}$ for all time instant $t \geq 0$. Furthermore, we define the running cost $h(x,u) = 10|x|_\mathcal{O} + |u|_{K\mathcal{O}}$, which satisfies Assumption~\ref{ass:cost}. 

\subsection{Iterative Task}
We use the LMPC~\eqref{eq:FTOCP} and~\eqref{eq:MPCpolicy} to iteratively steer the system from $x_0 = [5.656; 0]$ to the robust invariant set $\mathcal{O}$. We designed a sub-optimal robust MPC to perform the $0$th iteration and to construct the robust safe set $\mathcal{CS}^0$ and the robust $Q$-function $Q^0(\cdot)$, which are used to initialize the LMPC with $N=3$. We underline that the closed-loop performance of the robust MPC can be improved tuning its parameters. The goal of this section is to show that, given a safe sub-optimal policy, the proposed strategy may be used to iteratively improve the closed-loop performance. In the next section, we will show that the proposed strategy allows us also to enlarge the region of attraction associated with the safe sub-optimal policy used to initialize the algorithm.

\begin{figure}[t!]
\centering
\includegraphics[width=\columnwidth]{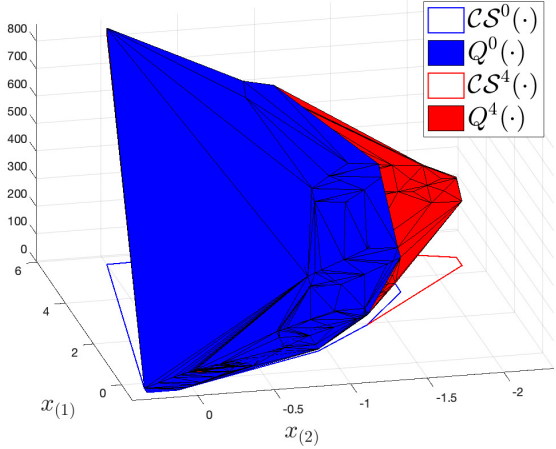}
\caption{Comparison between the robust safe set and $Q$-function at the first and last iteration.}
\label{fig:iterativeQfunEvolution}
\end{figure}

We perform $4$ iterations of the control task for the certainty equivalent system. At each $j$th iteration, we store the LMPC predicted policy and the closed-loop data in order to construct the robust safe set $\mathcal{CS}^j$ and the robust $Q$-function $Q^j(\cdot)$. Table~\ref{table:closedLoopCost} shows that the closed-loop cost of the certainty equivalent system decreases, until it converges to a steady state value after $4$ iterations.

\begin{table}[h!]
\caption{Closed-loop cost $J^j_{0\rightarrow T^j}(x_0)$ for iteration $i\in \{0,\ldots,4\}$.}
\label{table:closedLoopCost}
\centering
\begin{tabular}{cccccc} \midrule
$i=0$& $i=1$& $i=2$&$i=3$&$i=4$ \\
$863.4245$        & $827.9588$         & $827.9380$        & $827.9371$ & $827.9371$\\ \midrule
\end{tabular}
\end{table}

Finally, in Figure~\ref{fig:iterativeQfunEvolution} we compare the robust safe set and the robust $Q$-function at the first and last iteration. First, we notice that the robust safe set, which represents the domain of the $Q$-function, is enlarged. Furthermore, we confirm that $Q^j(\cdot)$ is non-increasing (i.e. $Q^0(x) \leq Q^{4}(x), \forall x \in \mathcal{CS}^{4}$) and therefore it guarantees better bounds on the performance of the certainty equivalent closed-loop system~\eqref{eq:disturbanceFreeSys} and~\eqref{eq:MPCpolicy}, as shown in Theorem~\ref{th:cost}.

\begin{figure}[t]
\centering
\includegraphics[width=\columnwidth]{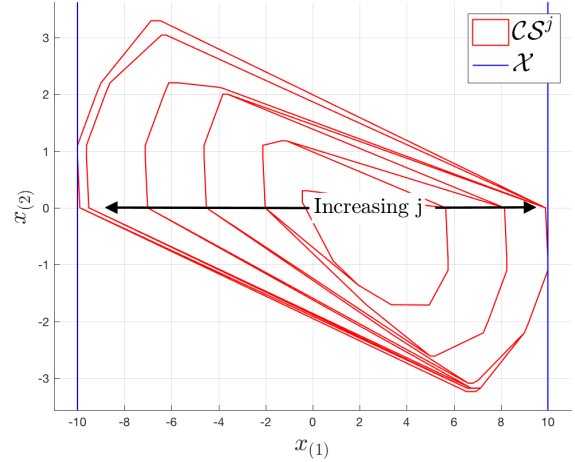}
\caption{Evolution of the robust safe set at each iteration.}
\label{fig:evolutionRS}
\end{figure} 

\begin{figure}[b]
\centering
    \includegraphics[width=\columnwidth]{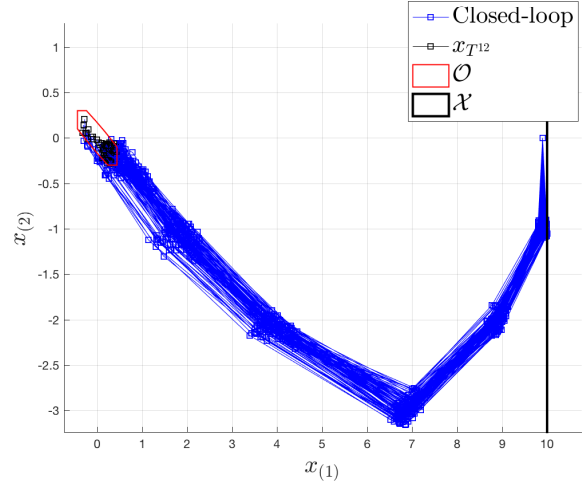}
\caption{Closed-loop trajectories for different disturbance realizations.}
\label{fig:closedloop}
\end{figure}

\subsection{Domain Enlargement}\label{sec:resIterativeEnlargement}
We show that the domain of the LMPC policy may be iteratively enlarged. 
At each iteration, we simulate the uncertain closed-loop system~\eqref{eq:sys} and~\eqref{eq:MPCpolicy} and we store both the closed-loop data and the predicted LMPC policy~\eqref{eq:MPCpolicy}. The stored data are used to construct the robust safe set $\mathcal{CS}^j$ and robust $Q$-function as described in Section~\ref{sec:LMPCpre}.

\begin{figure*}[h!]
\includegraphics[width=1.0\textwidth]{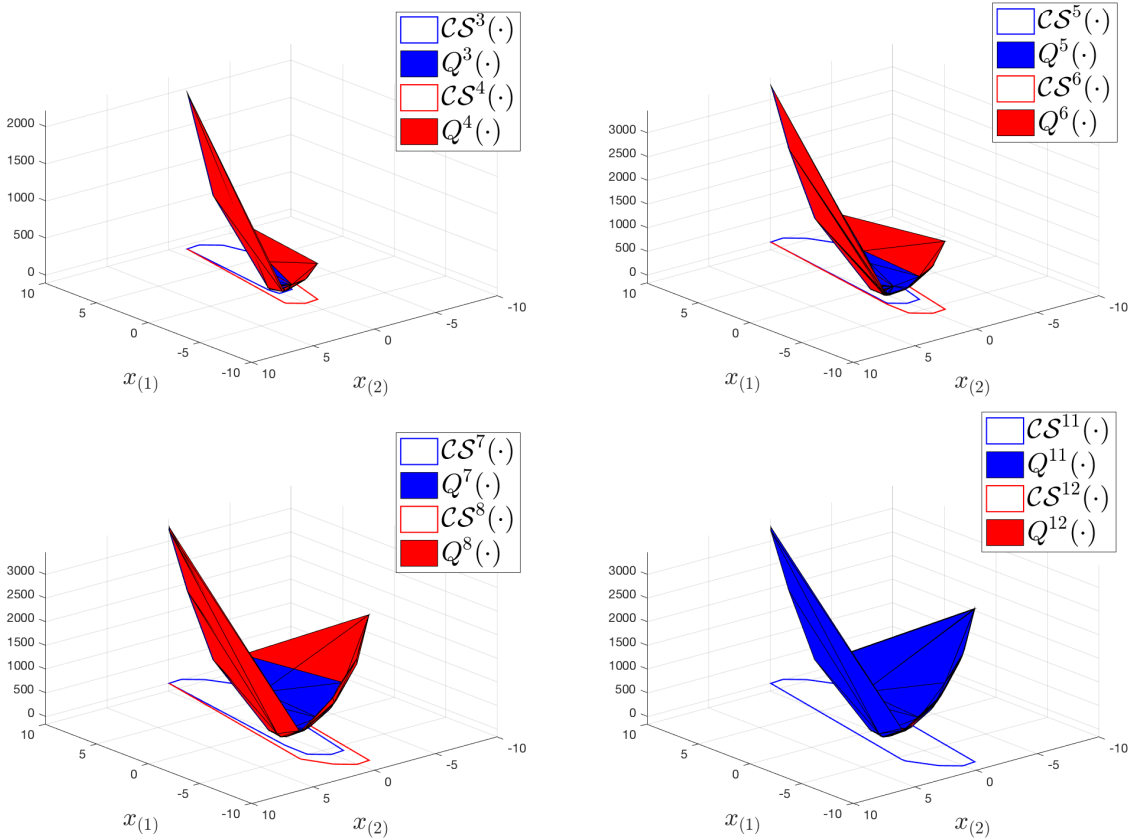}
\caption{Evolution of the robust $Q$-function $Q^j$ at each iteration. Notice that $Q^j(\cdot)$ (in blue) is a lower-bound to $Q^{j+1}(\cdot)$ (in red) for all $i \in \{3,5,7,11\}$, until convergence is reached and $Q^{11}(\cdot) =Q^{12}(\cdot)$.}
\label{fig:evolutionQ}
\end{figure*} 

Furthermore, we use the strategy proposed in Section~\ref{sec:regAtt} to iteratively select the initial condition. At each iteration $j$, we define the vector $d^j =[(-1)^j,0]$ and we pick $x_0^j$ as the optimal initial state $x_0^*$ from problem $P(N, d^j)$ in~\eqref{eq:regAttApp}.
Notice that by definition~\eqref{eq:CS} we have that $\mathcal{CS}^j \subseteq \mathcal{CS}^{j+1}$, therefore the set of states which can be steered to $\mathcal{O}$ by the LMPC~\eqref{eq:FTOCP} and~\eqref{eq:MPCpolicy} does not shrink, i.e., $\mathcal{C}^j \subseteq \mathcal{C}^{j+1}$. 

Figure~\ref{fig:evolutionRS} shows that the robust convex safe set $\mathcal{CS}^j$ grows at each iteration, until it converges to a set which saturates the state constraints. We underline that the closed-loop data used to enlarge $\mathcal{CS}^j$ are generated by the LMPC, which steers the system to regions of the state space associated with low cost values. In other words, the growth of the robust safe set is cost-driven. More importantly, the iterative enlargement of the $\mathcal{CS}^j$ is performed safely. Indeed, the LMPC guarantees robust state and input constraints satisfaction at each iteration.

Figure~\ref{fig:closedloop} shows $100$ Monte Carlo simulations of the closed-loop system for the $12$th iteration. We notice that the closed-loop trajectories satisfy state constraints and converge to the set $\mathcal{O}$, regardless of the disturbance realization.

\begin{table}[b!]
\centering
\caption{Performance of the LMPC policy~\eqref{eq:MPCpolicy} and the safe policy~\eqref{eq:theSafePolicy} in closed-loop with the uncertain system~\eqref{eq:sys}.}\label{tab:comparison}
\begin{tabular}{c|cc}
            & Average solver time & Average closed-loop cost \\\midrule
LMPC        &        $4.6$s   &         $77.1$             \\
Safe Policy &       $5$ms  &         $80$         \\\midrule
\end{tabular}
\end{table}

Finally, Figure~\ref{fig:evolutionQ} shows the growth of the $Q$-function $Q^j(\cdot)$, which is non-increasing at each iteration. It is important to underline that $Q^j(\cdot)$ is piece-wise affine as it is the solution to a parametric LP \cite{BorrelliBemporadMorari_book}. Furthermore, we notice that $Q^j(\cdot)$, which is an upper-bound of the closed-loop cost of the disturbance-fee system, resembles a quadratic function. This result is expected as the optimal value function for this problem is piece-wise quadratic \cite{BorrelliBemporadMorari_book}.

\subsection{Exploiting the safe policy}
In this section, we use the stored data from Section~\ref{sec:dataBasedPolicy} to construct the safe policy~\eqref{eq:theSafePolicy}. We tested this policy for $1000$ Monte Carlo simulations, where we randomly sampled the initial condition $x_0$ from the robust convex safe set $\mathcal{CS}^{12}$. We confirm that, for all initial conditions $x_0 \in \mathcal{CS}^{12}$ and disturbance realizations, the safe policy~\eqref{eq:theSafePolicy} steered the system to the set $\mathcal{O}$, as shown in Figure~\ref{fig:safePolicy}.

\begin{figure}[h!]
\centering
\includegraphics[width=\columnwidth]{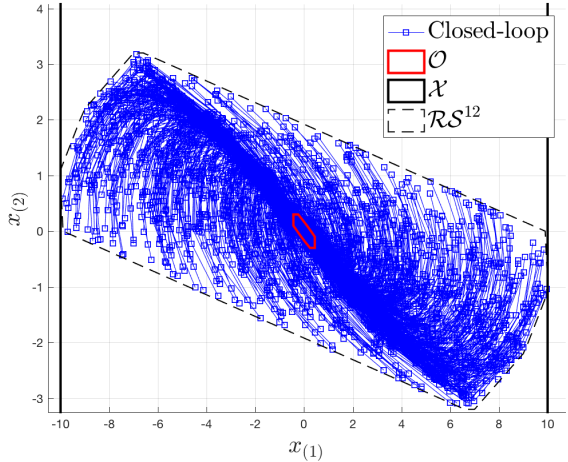}
\caption{Closed-loop trajectories for different disturbance realizations and initial conditions.}
\label{fig:safePolicy}
\end{figure}

We compare the performance of the uncertain system \eqref{eq:sys} in closed-loop with the safe policy~\eqref{eq:theSafePolicy} and the LMPC~\eqref{eq:FTOCP} and~\eqref{eq:MPCpolicy}. In particular, we simulated both the closed-loop system \eqref{eq:sys} and \eqref{eq:theSafePolicy} and the closed-loop system \eqref{eq:sys} and~\eqref{eq:MPCpolicy} for the same random initial condition $x_0 \in \mathcal{CS}^{12}$ and disturbance realization. As reported in Table~\ref{tab:comparison}, on average it takes $\sim 5$ms to evaluate the safe policy~\eqref{eq:theSafePolicy} and $\sim 4.6$s to evaluate the LMPC policy~\eqref{eq:MPCpolicy}. This result is expected as the safe policy~\eqref{eq:theSafePolicy} is evaluated solving a LP and the LMPC policy~\eqref{eq:MPCpolicy} solving $N+1=4$ QPs. On the other hand, it is interesting to notice that the closed-loop cost associated with the safe policy~\eqref{eq:theSafePolicy} is on average $\sim 3\%$ higher than the cost associated with the LMPC policy~\eqref{eq:MPCpolicy}, as shown in Table~\ref{tab:comparison}. This result suggests that, in applications where the computational power is not always available, one can first use the proposed LMPC to iteratively construct a large robust convex safe set $\mathcal{CS}^j$ and robust $Q$-function $Q^j$. Afterwards, these quantities can be leveraged to synthesize a safe control policy, which at the cost of slickly worse performance is able to reduce the computational burden.

\section{Conclusions}\label{sec:conclusions}
In this work we proposed a robust Learning Model Predictive Controller (LMPC) for linear systems subject to bounded additive uncertainty. 
At each execution of the control task, we store both the closed-loop data and the optimal predicted policy of the LMPC. First, we show how the stored data can be combined with the optimal policy from the LMPC to construct a safe set and an approximation to the value function. Afterwards, we design a safe policy which may be used to complete the task from any state in the safe set. Finally, the safe set, the value function approximation, and the safe policy are used in the control design, which guarantees input to state stability, robust constraint satisfaction, and performance bounds for the certainty equivalent closed-loop system. The effectiveness of the proposed LMPC is tested on a numerical example.

\section{Acknowledgment}
The authors would like to thank Monimoy Bujarbaruah and Siddharth Nair for helpful discussions and
reviews. Some of the research described in this review was funded by the Hyundai Center of Excellence
at the University of California, Berkeley. This work was also sponsored by the Office of Naval
Research. The views and conclusions contained herein are those of the authors and should not be
interpreted as necessarily representing the official policies or endorsements, either expressed or
implied, of the Office of Naval Research or the US government.

\section{Appendix}\label{sec:appendixProof}
\subsection{Proof of Proposition~\ref{prop:invariance}}
\begin{proof}
The proof follows from linearity of system~\eqref{eq:sys} and convexity of the constraint set~\eqref{eq:constraintSet}. \\
By Assumption~\ref{ass:policyAss}, for all vertices $v_{k|t}^{j,i}$ and associated control action $\pi^{j,*}_{k|t}( v_{k|t}^{j,i} )$ collected in the columns of the matrices ${\bf{X}}^j$ in~\eqref{eq:xMatrix} and  ${\bf{U}}^j$ in~\eqref{eq:uMatrix}, we have that
\begin{equation*}
    Av_{k|t}^{j,i} + B\pi^{j,*}_{k|t}( v_{k|t}^{j,i} ) + w \in \mathcal{CS}^j \subseteq \mathcal{X}, ~\forall w \in \mathcal{W}.
\end{equation*}
The above equation implies that $\forall x \in \mathcal{CS}^j$ and $\forall {\bm{\lambda}}^j \in \Lambda(x)$
\begin{equation*}
    A {\bf{X}}^j {\bm{\lambda}}^j + B {\bf{U}}^j {\bm{\lambda}}^j+w \in \mathcal{CS}^j, \forall w \in \mathcal{W}.
\end{equation*}
By definition $\forall \kappa^j(\cdot) \in \mathcal{SP}^j$ there exists ${\bm{\lambda}}^j \in \Lambda(x)$ such that $x = {\bf{X}}^j{\bm{\lambda}}^j$ and $\kappa^j(x) = {\bf{U}}^j {\bm{\lambda}}^j$. Consequently, from the above equation we have that $\forall x \in \mathcal{CS}^j$ and $\forall \kappa^j(\cdot) \in \mathcal{SP}^j$
\begin{equation*}
    A x + B \kappa^j(x) + w  \in \mathcal{CS}^j, \forall w \in \mathcal{W}.
\end{equation*}
\end{proof}

\subsection{Proof of Proposition~\ref{prop:QfunDecrease}}
\begin{proof}
Recall that we initialized $\mathcal{CS}^{0}=\mathcal{O}$ and $Q^{0}(x)=0~\forall x \in \mathcal{CS}^{0}$, then we trivially have that
\begin{equation*}
    Q^{0}(x) \geq h(x,  \kappa^{0,*}(x)) + Q^{0}(Ax+B \kappa^{0,*}(x)), \forall x \in \mathcal{CS}^{0}.
\end{equation*}
Next, we show that $\forall j \geq 1$ and $\forall x \in \mathcal{CS}^{j}$
\begin{equation*}
    Q^{j}(x) \geq h(x,  \kappa^{j,*}(x)) + Q^{j}(Ax+B \kappa^{j,*}(x)).
\end{equation*}
Given $ x \in \mathcal{CS}^j$, we have that
\begin{equation}\label{eq:defQProof}
    Q^j(x) = ({\bf{J}}^{j})^\top {\bm \lambda}^{*,j},
\end{equation}

Now notice that by definitions~\eqref{eq:J_definition1}--\eqref{eq:J_definition2} each element of ${\bf{J}}^{j}$ can be written as
\begin{equation}\label{eq:Qwritten_1}
\begin{aligned}
    J_{k|t}^{j}\big(v_{k|t}^{j,i}  \big) &= h\big(v_{k|t}^{j,i} ,  \pi_{k|t}^{j,*} (v_{k|t}^{j,i} ) \big) \\
    & \quad \quad \quad \quad \quad \quad +\sum_{r= 1}^{l^{k+1}} \gamma^{r,*} J^j_{k+1|t}\big(v_{k+1|t}^{j,r}\big)  \\
    &\geq h\big(v_{k|t}^{j,i} ,  \pi_{k|t}^{j,*} (v_{k|t}^{j,i} ) \big)  \\
    & \quad \quad \quad \quad + Q^j \big(A v_{k|t}^{j,i} +  B \pi_{k|t}^{j,*} (v_{k|t}^{j,i}) \big).
\end{aligned}
\end{equation}
where the above inequality holds as 
\begin{equation*}
    \sum_{r=1}^{l^{k+1}} \gamma^{r,*}  v_{k+1|t}^{j,r} = A v_{k|t}^{j,i} + B  \pi_{k|t}^{j,*} (v_{k|t}^{j,i})
\end{equation*} 
and for $x^+ = A v_{k|t}^{j,i} +  B \pi_{k|t}^{j,*} (v_{k|t}^{j,i})$
\begin{equation*}
Q^j \big( x^+ \big) = \min_{\boldsymbol{\lambda}^j \in \Lambda(x^+)} ({\bf{J}}^{j})^\top {\bm \lambda}^{j}.
\end{equation*}
Convexity of $h(\cdot, \cdot), Q^j(\cdot)$ and Equation~\eqref{eq:Qwritten_1} imply that
\begin{equation}
\begin{aligned}
     & Q^j(x)  = {\bf{J}}^j {\bm \lambda}^{*,j} \\ 
     & \geq h\Bigg(\begin{bmatrix} v_{0|0}^{0,1} \\ 
                                           \vdots \\ 
                                  v_{k|t}^{j,i} \\ 
                                  \vdots \\
                                  v_{t+T^j\text{-}1|T^j}^{l^{N\text{-}1},j} \end{bmatrix}  {{ { \bm \lambda}}}^{j,*} , 
      \begin{bmatrix} \pi_{0|0}^{0,*}(v_{0|0}^{0,1}) \\ 
                                                \vdots \\ 
                    \pi_{k|t}^{j,*}(v_{k|t}^{j,i}) \\ 
                                                \vdots \\
                    \pi_{t+T^j\text{-}1|T^j}^{j,*}(v_{t+T^j\text{-}1|T^j}^{j,l^{N\text{-}1}}) \end{bmatrix}  {{ { \bm \lambda}}}^{j,*}\Bigg)  \\
     & \quad \quad + \begin{bmatrix} Q^j\big(A v_{0|0}^{0,1} + B \pi_{0|0}^{0,*} (v_{0|0}^{0,1}) \big) \\ 
                                                                                                \vdots \\ 
                                        Q^j(A v_{k|t}^{j,i} + B \pi_{k|t}^{j,*} (v_{k|t}^{j,i}) \big) \\ 
                                        \vdots \\  
                                        Q^j\big(A v_{t+T^j\text{-}1|T^j}^{j,l^{N\text{-}1}} + B \pi_{t+T^j\text{-}1|T^j}^{j,*} ( v_{t+T^j\text{-}1|T^j}^{j,l^{N\text{-}1}} ) \big) \end{bmatrix} {\bm \lambda}^{*,j} \\
     & ~~ \geq h\big({\bf{X}}^j  {{ { \bm \lambda}}}^{j,*} , {\bf{U}}^j  {{ { \bm \lambda}}}^{j,*}\big)  + Q^j ( {\bf{X}}^j \tilde {\bm \lambda}^{j,*}),
\end{aligned}
\end{equation}
for some $\tilde {\bm \lambda}^{j,*}$ such that
\begin{equation*}
    {\bf{X}}^j \tilde {\bm \lambda}^{j,*} = \begin{bmatrix} A v_{0|0}^{0,1} + B \pi_{0|0}^{0,*} \big(v_{0|0}^{0,1} \big) \\ \vdots \\ A v_{k|t}^{j,i} + B \pi_{k|t}^{j,*} \big(v_{k|t}^{j,i} \big) \\ \vdots \\ A v_{t+T^j\text{-}1|T^j}^{j,l^{N\text{-}1}} + B \pi_{t+T^j\text{-}1|T^j}^{j,*} \big(v_{t+T^j\text{-}1|T^j}^{j,l^{N\text{-}1}} \big) \end{bmatrix} {\bm \lambda}^{j,*}.
\end{equation*}
The above equation implies that
\begin{equation*}
\begin{aligned}
    Q^j(x) & \geq h\big({\bf{X}}^j  {{\bm \lambda}}^{j,*} , {\bf{U}}^j  {{\bm \lambda}}^{j,*}\big)  + Q^j ( {\bf{X}}^j \tilde {\bm \lambda}^{j,*}) \\
    & \geq h({\bf{X}}^j  {\bm \lambda}^{j,*} , {\bf{U}}^j {\bm \lambda}^{j,*} ) + Q^j( A{\bf{X}}^j  {\bm \lambda}^{j,*} + B {\bf{U}}^j {\bm \lambda}^{j,*} )
\end{aligned}
\end{equation*}
Finally, we notice that by definition~\eqref{eq:theSafePolicy} $\kappa^{j,*}( x )={\bf{U}}^j {\bm \lambda}^{j,*}$, therefore the above equation can be rewritten as 
\begin{equation*}
    Q^j(x) \geq h(x , \kappa^{j,*}(x) ) + Q^{j}(Ax + B \kappa^{j,*}(x)).
\end{equation*}
\end{proof}

\subsection{Proof of Proposition~\ref{prop:safePolicyISS}}
\begin{proof} First we show that the set $\mathcal{O}$ is robust positive invariant for the closed-loop system~\eqref{eq:sys} and \eqref{eq:theSafePolicy}. Assume that at time $t$ of iteration $j$ the state $x_t^j \in \mathcal{O}$, by definitions~\eqref{eq:xMatrix}, \eqref{eq:J_definition1}, \eqref{eq:Qfun}, and \eqref{eq:theSafePolicy} we have that $Q^j(x_t^j)=0$ and $\kappa^{j,*}(x_{t}^{j})=Kx_t^j$. Therefore, by Assumption~\ref{ass:O_inf} we have that closed-loop system $x_{t+1}^j = Ax_t^j + B\kappa^{j,*}(x_{t}^{j}) +w_t^j \in \mathcal{O}, \forall w_t^j \in \mathcal{W}$ and that $\mathcal{O}$ is robust positive invariant for the closed-loop system~\eqref{eq:sys} and \eqref{eq:theSafePolicy}. \\
Continuity of $Q^j( \cdot )$, Assumption~\ref{ass:cost} and \eqref{eq:FTOPC_Cost} imply the existence of $\alpha_1, \alpha_2 \in \mathcal{K}_\infty$ such that  $\forall x_t \in \mathcal{CS}^j$
\begin{equation}
    \alpha_1(|x_t|_\mathcal{O}) \leq h(x_t, 0) \leq Q^j( x_t ) \leq \alpha_2(|x_t|_\mathcal{O}).
\end{equation}
We recall that $Q^j( \cdot )$ is Lipschitz continuous as it is the solution to a parametric LP \cite{BorrelliBemporadMorari_book}.
Therefore, from Proposition~\ref{prop:QfunDecrease} and Lipschitz continuity of $Q^j( \cdot )$ for a Lipschitz constant $L$, we have that $\forall x_t \in \mathcal{CS}^j$
\begin{equation*}
\begin{aligned}
     Q^j&( Ax_t+B\kappa^{j,*}(x_t)+w_t )-Q^j( x_t ) \\
     &=Q^j( Ax_t+B\kappa^{j,*}(x_t)+w_t )-Q^j( Ax_t+B\kappa^{j,*}(x_t))\\&+Q^j( Ax_t+B\kappa^{j,*}(x_t))-Q^j( x_t )\\
     &~\leq L||w_t||_2+Q^j( Ax_t+B\kappa^{j,*}(x_t))-Q^j( x_t )\\
     &\leq L||w_t||_2 - h(x_t, \kappa^{j,*}(x_t)).
\end{aligned}
\end{equation*}
Therefore $Q^j(\cdot)$ is a ISS-Lyapunov function according with Definition~\ref{def:iss} and by Proposition~\ref{prop:ISS} the closed-loop system~\eqref{eq:sys} and~\eqref{eq:theSafePolicy} is ISS for the robust positive invariant set $\mathcal{O}$. 
\end{proof}

\subsection{Proof of Proposition~\ref{prop:costBound}}
\begin{proof} By assumption $x_t^j \in \mathcal{C}^j$, therefore by Theorem~\ref{th:recFeas} Problem~\eqref{eq:FTOCP} is feasible at all times $t\geq0$ for some $N_t^j \in \{0,\ldots, N\}$. Let $\boldsymbol{\pi}_t^{j,*}(\cdot)$ be the optimal $N$-steps policy~\eqref{eq:optNstepPolicy} and  $J_{t\rightarrow t+N}^{\scalebox{0.4}{LMPC},j}(x_t^j)$ the optimal cost. At time $t$, let
\begin{equation*}
    [\bar x_{t|t}^{j,*}, \ldots, \bar x_{t+N|t}^{j,*} ]
\end{equation*}
be the optimal trajectory of the certainty equivalent system associated with the optimal policy $\boldsymbol{\pi}_t^{j,*}(\cdot)$. Then, the LMPC cost at time $t$ can be written as 
\begin{equation}\label{eq:cost_t}
\begin{aligned}
    J^{\scalebox{0.5}{LMPC},j}_{t \rightarrow t+N}( x_t^j ) &= \sum_{k=t}^{t+N-1} h(\bar x_{k|t}^{j,*}, \pi_{k|t}^{j,*}(\bar x_{k|t}^{j,*})) + Q^{j-1}(\bar x_{t+N|t}^{j,*})\\
    &= h(\bar x_{t|t}^{j,*}, \pi_{t|t}^{j,*}(\bar x_{k|t}^{j,*})) + p(\bar x_{t+1|t}^{j,*})
\end{aligned}
\end{equation}
where the function $p(\bar x_{t+1|t}^{j,*})$, which represents the total cost from time $t+1$ to time $t+N$ for the optimal policy $\boldsymbol{\pi}_t^{j,*}(\cdot)$, is Lipschitz as it is composed of summation and composition of Lipschitz functions. Now we notice that by feasibility of \eqref{eq:candidateSolution} the cost of the LMPC at time $t+1$ satisfies the following inequality
\begin{equation}\label{eq:cost_t+1}
\begin{aligned}
    &J^{\scalebox{0.5}{LMPC},j}_{t+1 \rightarrow t+1+N}(f(x_t,w_t)) \leq \sum_{k=t+1}^{t+N-1} h\big(\bar x_{k|t+1}^j ,  \pi_{k|t}^{*,j}(\bar x_{k|t+1}^j) \big) \\
    &~~+ h\big(\bar x_{t+N|t+1}^j, \kappa^{j-1,*} (\bar x_{t+N|t+1}^j )\big)\\
    &~~+ Q^{j-1}\big(A \bar x_{t+N|t+1}^j+B \kappa^{j-1,*}(\bar x_{t+N|t+1}^j)\big)\\
    &~~\leq \sum_{k=t+1}^{t+N-1} h\big(\bar x_{k|t+1}^j ,  \pi_{k|t}^{*,j}(\bar x_{k|t+1}^j) \big)+ Q^{j-1}\big( \bar x_{t+N|t+1}^j \big)\\
    &~~= p(\bar x_{t+1|t+1}^j),
\end{aligned}
\end{equation}
where the last inequality follows from Proposition~\ref{prop:QfunDecrease}, the feasible nominal trajectory is given by
\begin{equation*}
\begin{aligned}
    \bar x_{k|t+1}^j = A^{k-t-1}(Ax_t+&B\pi_{t|t}^{*,j}(x_t) + w_t) \\
    &+ \sum_{i=t+1}^{k-1}A^{k-1-i}B\pi_{i|t}^{*,j}(\bar x_{i|t+1}^j)
\end{aligned}
\end{equation*} 
for $k=\{t+2, \ldots, t+N \}$, and the initial state
\begin{equation}\label{eq:equalityProp}
    \bar x_{t+1|t+1}^j = Ax_t^j+B\pi_{t|t}^{*,j}(x_t^j) + w_t^j = \bar x_{t+1|t}^{j,*} + w_t^j,
\end{equation}
$\forall w_t^j \in \mathcal{W}$. Therefore, from the $L$-Lipschitz continuity of $p(\cdot)$, \eqref{eq:cost_t}, \eqref{eq:cost_t+1} and \eqref{eq:equalityProp} we have that
\begin{equation}
\begin{aligned}
    & J^{\scalebox{0.5}{LMPC},j}_{t+1 \rightarrow t+1+N}(f(x_t^j,w_t^j)) - J^{\scalebox{0.5}{LMPC},j}_{t \rightarrow t+N}(x_t^j)\\
    &\quad \quad \quad = p(\bar x_{t+1|t}^{j,*} + w_t)-p(\bar x_{t+1|t}^{j,*}) -h(\bar x_{t|t}^{*,j},\pi_{t|t}^{*,j}( x_t^j))\\
    &\quad \quad \quad \quad \quad \leq -h(\bar x_{t|t}^{*,j},\pi_{t|t}^{*,j}(x_t^j)) + L ||w_t^j||\\
    &\quad \quad \quad \quad \quad \quad \leq -h(\bar x_{t|t}^{*,j},0) + L ||w_t^j||\\
    &\quad \quad \quad \quad \quad \quad \quad \leq -\alpha_x^l( x_{t}^{j}) + L ||w_t^j||,
\end{aligned}
\end{equation}
for all $w_t^j \in \mathcal{W}$, where the last inequality holds by Assumption~\ref{ass:cost}.
\end{proof}




\bibliographystyle{IEEEtran}
\bibliography{IEEEabrv,mybibfile}

\end{document}